\documentclass[reqno, 11pt]{amsart}
\pdfoutput=1

\usepackage{amsmath,amssymb,amsthm,amsfonts}
\usepackage{enumitem}
\usepackage{mathtools}
\usepackage[normalem]{ulem}
\usepackage{mathrsfs}
\usepackage{hyperref}

\newtheorem{theorem}{Theorem}[section]
\newtheorem{lemma}[theorem]{Lemma}
\newtheorem{prop}[theorem]{Proposition}
\newtheorem{cor}[theorem]{Corollary}

\theoremstyle{definition}

\newcommand{\e}{\mathrm{e}}
\newcommand{\N}{\mathbb{N}}
\newcommand{\Z}{\mathbb{Z}}
\newcommand{\R}{\mathbb{R}}
\newcommand{\C}{\mathbb{C}}
\renewcommand{\P}{\mathbb{P}}
\newcommand{\X}{\mathbb X}
\newcommand{\V}{\mathbb V}
\newcommand{\U}{\mathbb U}
\newcommand{\A}{\mathbb A}

\newcommand{\be}{\begin{equation}}
\newcommand{\ee}{\end{equation}}

\renewcommand{\Re}{\mathrm{Re}\,}
\newcommand{\Conv}{\mathop{\scalebox{2}{\raisebox{-0.275ex}{$\ast$}}}}

\def\1{{\mathchoice {1\mskip-4mu\mathrm l}      
{1\mskip-4mu\mathrm l}
{1\mskip-4.5mu\mathrm l} {1\mskip-5mu\mathrm l}}}

\begin{document}

\title{A Multi-Body Dobrushin-Sokal Criterion -- Part II} 
\date{June 27, 2026}
\author[J. P.~Neumann]{Jan Philipp Neumann}

\begin{abstract}
	We prove a sufficient condition for the absolute convergence of Mayer cluster expansions of log-partition and correlation functions applicable to lattice gases with possibly complex-valued multi-body interactions. 
	Not only are several classical results subsumed but a partition scheme for spanning hypergraphs also makes our methods well-suited for treating stronger multi-body interactions, including higher-order hard-core repulsion in the context of hypergraph independence polynomials. 
	Furthermore, our approach is easily combined with the Gruber--Kunz condition to produce extended convergence results for the polymer expansion of lattice gases, rivalling those obtained not too long ago by Nguyen and Fern{\'a}ndez (2024).
	
	\bigskip 
		
    \noindent \emph{Mathematics Subject Classification}: 82B20 (05C65, 05C70).
    
	\medskip     
    
    \noindent \emph{Keywords}: Classical lattice gas; multi-body interactions; grand partition function; cluster expansions; Kirkwood--Salsburg hierarchy; polymer expansion.
\end{abstract}

\maketitle

\section{Introduction}

The present article deals with new convergence criteria for the cluster and polymer expansions of classical lattice gases with multi-body interactions. 
Although essentially self-contained, it complements \cite{ref:neumann-part-i}, henceforth just ``Part~I'', where the author derives a sufficient condition for non-vanishing of partition functions. 
Results on the convergence of cluster expansions are only alluded to there. 
Here, they are stated and proven. 

For pairwise interactions, there is an extensive body of literature on the matter: 
the Koteck{\'y}--Preiss condition, extended by Ueltschi and Poghosyan, see \cite{ref:kotecky-preiss, ref:ueltschi, ref:poghosyan-ueltschi}, is concise, widely applicable and used to this day; 
a slightly superior and similarly concise criterion is that of Dobrushin and Sokal for repulsive lattice gases, see \cite{ref:dobrushin-semi-invariants, ref:dobrushin-saint-flour, ref:sokal, ref:scott-sokal}; 
the even better, though a little more involved, Fern{\'a}ndez--Procacci condition is again amenable to wider generalisation, see \cite{ref:fernandez-procacci, ref:bissacot-fernandez-procacci, ref:faris, ref:jansen-cluster, ref:jansen-kolesnikov}. 
Particularly in the last one of these references, Jansen and Kolesnikov thoroughly described how to leverage the Kirkwood--Salsburg hierarchy for correlation functions to obtain a systematic approach to convergence criteria, including all of the above.

This systematic approach remains perfectly valid in the case of multi-body, i.e., higher-order, interactions but the previously trivial combinatorics of the Kirkwood--Salsburg integral kernel become considerably more complicated. 
These complications can be partially sidestepped, provided a two-particle potential amenable to, say, the generalised Koteck{\'y}--Preiss condition is extended by a stable higher-order interaction of finite range, cf.\ \cite{ref:moraal, ref:skrypnyk}, but, compared to the purely pairwise case, the intermediate estimates seem somewhat crude and we are unwilling to assume finite range outright. 
Instead, we tackle the diagrammatic interpretation of the kernel and use a novel partition scheme for spanning hypergraphs (Lemma~\ref{lem:spanning-hypergraph-partition-scheme}) to derive our first main result, Theorem~\ref{thm:dobrushin-sokal}, an extension of the Dobrushin--Sokal criterion to classical lattice gases with site-wise hard-core self-repulsion and otherwise complex multi-body interactions. 
As in the conditions listed above for pairwise interactions, only one inequality per site is required, involving only a single auxiliary function, called $\alpha$ here, on the underlying lattice in addition to the innate model-defining parameters. 
A variant of this combination of the Kirkwood--Salsburg hierarchy with our partition scheme for coverings is included in Part~I as an alternative proof method. 
But only here do we extend its application to derive convergence of cluster expansions.

Our main result comes on the heels of recent progress regarding zero-free activity polydiscs for hypergraph independence polynomials, i.e., partition functions of lattice gases with pure multi-body hard-core repulsions. 
More precisely, our original derivation of Theorem~\ref{thm:dobrushin-sokal}, cf.\ Part~I, consisted in generalising approaches by Galvin et al.\ in \cite{ref:galvin-et-al} and Bencs and Buys in \cite{ref:bencs-buys}. 
The latter two authors were able to reproduce the same optimal uniform radii in terms of global degree bounds that had previously only been established for standard graphs, i.e., for pairwise hard-core interactions, cf., e.g., \cite{ref:scott-sokal}. 
Their intricate way of isolating the interaction terms from each other is functionally replaced here by inserting our aforementioned partition scheme into the Kirkwood--Salsburg hierarchy. 
In turn, this replacement was spurred on by the realisation that, if we choose the constant auxiliary function $\alpha = 1$ in our condition, we obtain a direct improvement of Gallavotti and Miracle-Sol{\'e}'s condition in \cite{ref:gallavotti-miracle-sole}, derived explicitly by bounding the Kirkwood--Salsburg kernel. 
The latter improvement, which extends to the variants in \cite{ref:gallavotti-miracle-sole-robinson, ref:ruelle}, can, in fact, be derived by a banal observation without any need for our partition scheme. 
The same is true of our criterion whenever $\alpha \geq 1$ but the partition scheme is so far necessary to bridge the gap to Galvin et al.'s result as well as that of Bencs and Buys, where $\alpha < 1$, cf.\ Section~5 of Part~I.

There are more results on cluster expansions with multi-body interactions out there. 
The interested reader may have a look at the introductions of \cite{ref:jansen-tsagkarogiannis, ref:galvin-et-al, ref:nguyen-fernandez} and the references listed therein. 
Here, however, we avoid overly technical a priori assumptions and still derive multi-body variants of the Dobrushin--Sokal and Koteck{\'y}--Preiss conditions, at least in the lattice gas setting. 
There is also the recent preprint \cite{ref:helmuth-pappik-perkins} on analyticity in continuum systems with repulsive multi-body potentials of finite range. 
Although it does not consider cluster expansions, it would be interesting to see if this result of Helmuth, Pappik and Perkins's can be extended to a more general class of systems.

There is yet another aspect to cluster expansions for lattice gases, also merely alluded to in Part~I. 
As described by Gruber and Kunz in \cite{ref:gruber-kunz}, the partition function of an essentially arbitrary lattice spin system equals that of another lattice gas whose sites/particles are non-empty finite subsets, dubbed ``polymers'', of the initial lattice, see also \cite{ref:procacci-scoppola, ref:nguyen-fernandez}. 
More importantly, the polymers in this reformulation only interact via a pairwise hard-core interaction prohibiting overlap so all of the classical cluster expansion results apply. 
Given the special structure of polymers, we, as well as others, still opt for the Gruber--Kunz convergence condition, itself a corollary to the Fern{\'a}ndez--Procacci criterion or a polymer-specific exact criterion, see \cite{ref:fernandez-procacci, ref:bissacot-fernandez-procacci} or \cite{ref:jansen-kolesnikov}, respectively. 
The caveat in this simplification is the fact that the combinatorial complexities of the previous cluster expansions are part of computing and bounding the individual polymer activities. 

Nevertheless, it contains a desirable (from a physical perspective) transition from pre-polymer activity $z$ to $\frac{z}{1+z}$ and our first approach can be adapted to yield a convergence condition, see Theorem~\ref{thm:dobrushin-sokal-polymer}, which extends to parameter ranges beyond where Theorem~\ref{thm:dobrushin-sokal} applies. 
In particular, we obtain notable improvements over the conditions in Procacci and Scoppola's \cite{ref:procacci-scoppola} as well as Gallavotti, Miracle-Sol{\'e} and Robinson's \cite{ref:gallavotti-miracle-sole-robinson}, cf.\ also \cite{ref:ruelle}, in the form of Corollary~\ref{cor:gallavotti-miracle-sole-robinson-refined} here. 
For a comparison with a more recent result, we consider that of Nguyen and Fern{\'a}ndez in \cite{ref:nguyen-fernandez}. 
While their approach applies to far more general lattice systems, our method has the advantage of covering stronger interactions, including hard-core ones.

The paper is structured as follows: 
Section~\ref{sec:results} introduces the basic notation, states and discusses the first main result, Theorem~\ref{thm:dobrushin-sokal}, and then does the same for the second main result, Theorem~\ref{thm:dobrushin-sokal-polymer}, on the polymer expansion; 
Section~\ref{sec:prelims-convolution-poisson} builds the framework for efficiently manipulating the relevant power series in what is sometimes called ``Ruelle's algebraic method''; 
this allows us to systematically extend our lattice gas notation in Section~\ref{sec:prelims-conditional-root-KS}, which also ends with our partition scheme for spanning hypergraphs in the form of Lemma~\ref{lem:spanning-hypergraph-partition-scheme}; 
the final Section~\ref{sec:proof-main} collects the proofs of our main results.

\section{Setup and main results} \label{sec:results}

\subsection{Configurations on a lattice}

Our \emph{lattice} $\X$ is a fixed finite or countably infinite set of \emph{sites} for particles to occupy. 
Ignoring assignments of infinitely many particles, a \emph{configuration} is a \emph{multi-index} in
\[
	\mathbf N : = \mathbf N(\X) : = \{\mathbf n = (n_x)_{x \in \X} \in \N_0^\X \mid |\mathbf n| < \infty\},
\]
where $|\mathbf n| : = \mathbf n[\X]$ with $\mathbf n[S] : = \sum_{x \in S} n_x$ for arbitrary $S \subset \X$. 
In view of the evident identification with finite counting measures, we use the \emph{Dirac measure} $\delta_x$ in $x \in \X$ to denote the multi-index $\1_{\{x\}}$. 
It is also often convenient to enumerate or label particles. 
More specifically, for an arbitrary finite index set $J$ of cardinality $|J| \in \N_0$, we use the symmetrisation map
\[
	\X^J = \{\mathbf x = (x_j)_{j \in J} \mid \forall j \in J : x_j \in \X\} \to \{\mathbf n \in \mathbf N \mid |\mathbf n| = |J|\}, \quad \mathbf x \mapsto \sum_{j \in J} \delta_{x_j},
\]
to identify a $|J|$-particle configuration with any \emph{tuple} in its preimage. 
Furthermore, configurations with at most one particle per site are called \emph{simple} here and are identified with finite \emph{subsets} via the bijection
\[
	\mathbf F : = \mathbf F(\X) : = \{X \Subset \X\} \to \mathbf N \cap \{0,1\}^\X, \quad X \mapsto \1_X = \sum_{x \in X} \delta_x.
\]
Another convenient identification consists in treating the obvious bijection
\[
	\mathbf N(\Lambda) \to \{\mathbf n \in \mathbf N \mid \mathbf n[\X \setminus \Lambda] = 0\}
\]
for arbitrary $\Lambda \subset \X$ as an inclusion of sets, mirroring the a priori inclusions $\Lambda^J \subset \X^J$ and $\mathbf F(\Lambda) \subset \mathbf F$.

\subsection{Simple interaction, Boltzmann factor and Ursell function}

We now fix an \emph{interaction potential} $V : \mathbf N \to \C \cup \{+\infty\}$. 
Both $V$ and the \emph{Mayer function} $f : = \e^{-V} - 1$, where $\e^{-\infty} : = 0$, are also defined on finite tuples over and subsets of $\X$ via the above identifications. 
The \emph{Boltzmann factor} $\kappa : \mathbf N \to \C$ can then be defined by setting
\[
	\kappa(\mathbf x) : = \e^{\sum_{S \subset J : S \neq \varnothing} V(\mathbf x_S)} = \prod_{S \subset J : S \neq \varnothing} (1 + f(\mathbf x_S)) = \sum_{\mathfrak h \in \mathfrak H_J} \prod_{e \in \mathfrak h} f(\mathbf x_e)
\]
for every tuple $\mathbf x \in \X^J$ with finite index set $J$, where $\mathbf x_S : = (x_j)_{j \in S}$ is the restriction/projection of $\mathbf x$ to an index subset $S \subset J$ while
\[
	\mathfrak H_\V : = \{\mathfrak h \subset \{e \Subset \V \mid e \neq \varnothing\}\}
\]
denotes the set of \emph{hypergraphs} (sets of \emph{edges}) on an arbitrary \emph{vertex} set $\V$. 

The characterisation of $\kappa$ in terms of hypergraph weights provides a direct and intuitive approach to characterising the \emph{Ursell function}, which is the unique $\varphi : \mathbf N \to \C$ satisfying $\varphi(0) = 0$ and
\be \label{eq:kappa-sum-prod-phi}
	\kappa(\mathbf x) = \sum_{\mathfrak p \in \mathfrak P_J} \prod_{P \in \mathfrak p} \varphi(\mathbf x_P)
\ee
for every tuple $\mathbf x \in \X^J$ with finite index set $J$, where we denote by
\[
	\mathfrak P_U : = \{\mathfrak p \subset \{P \subset U \mid P \neq \varnothing\} \mid U = \textstyle \bigsqcup_{P \in \mathfrak p} P\}
\]
the set of unordered \emph{partitions} of an arbitrary set $U$ into non-empty subsets. 
Clearly, for every set $\V$, we have a cluster decomposition
\[
	\mathfrak H_\V = \bigsqcup_{\mathfrak p \in \mathfrak P_\V} \{\mathfrak h = \textstyle \bigsqcup_{P \in \mathfrak p} \mathfrak h_P \mid \forall P \in \mathfrak p : \mathfrak h_P \in \mathfrak C_P\},
\]
denoting by $\mathfrak C_\V \subset \mathfrak H_\V$ the set of \emph{cluster hypergraphs} on $\V$, i.e., for $\V \neq \varnothing$, $\mathfrak C_\V$ coincides with the set of all $\mathfrak h \in \mathfrak H_\V$ such that $\V$ is \emph{connected} with respect to $\mathfrak h$ in the standard sense whereas $\mathfrak C_\varnothing : = \varnothing$ since connected components are always non-empty. 
One easily verifies that the characterisation of $\varphi$ via \eqref{eq:kappa-sum-prod-phi} necessitates
\[
	\varphi(\mathbf x) = \sum_{\mathfrak h \in \mathfrak C_J} \prod_{e \in \mathfrak h} f(\mathbf x_e)
\]
for every tuple $\mathbf x \in \X^J$ with finite index set $J$.

Throughout the entire article, we assume $V$ to be \emph{simple} in the sense that $\kappa$ vanishes outside $\mathbf F$ or, equivalently, $\kappa(2 \delta_x) = \e^{- 2 V(x) - V(2 \delta_x)} = 0$ for all $x \in \X$. 
A lot of the theory presented here does not depend on this blanket assumption but our actual results rely on it.

\subsection{Partition function and Mayer series}

If $z = (z(x))_{x \in \X}$ is a collection of commuting abstract \emph{activity} variables, we can define the formal power series
\[
	Z(z) : = \sum_{\mathbf n \in \mathbf N} \frac{z^\mathbf n}{\mathbf n!} \kappa(\mathbf n) \quad \text{and} \quad M(z) : = \sum_{\mathbf n \in \mathbf N} \frac{z^\mathbf n}{\mathbf n!} \varphi(\mathbf n),
\]
respectively called grand \emph{partition function} and \emph{Mayer series}, without issue. 
In these multivariate series, we use the notation
\[
	z^\mathbf n : = \prod_{x \in \X} z(x)^{n_x} \quad \text{and} \quad \mathbf n! : = \prod_{x \in \X} n_x!
\]
for $\mathbf n \in \mathbf N$ (only finitely many factors differ from $1$) and the definition of $\varphi$ is precisely equivalent to the identity $Z(z) = \exp(M(z))$. 
It readily follows that, given $x \in \X$, the three series
\[
	Z(z)^{-1} \frac{\partial Z}{\partial z(x)} (z), \quad \frac{\partial M}{\partial z(x)} (z) \quad \text{and} \quad \sum_{\mathbf n \in \mathbf N} \frac{z^\mathbf n}{\mathbf n!} \varphi(\delta_x + \mathbf n) = : \rho(x, z)
\]
are formally identical. 
We call the last one the (reduced) one-point \emph{correlation} of $x$. 
Restriction of these series to some \emph{reference volume} $\Lambda \subset \X$ is implemented via
\[
	Z_\Lambda(z) : = Z(z \1_\Lambda), \quad M_\Lambda(z) : = M(z \1_\Lambda) \quad \text{and} \quad \rho_\Lambda(x, z) : = \rho(x, z \1_\Lambda).
\]
It is particularly noteworthy that our restricted correlation does not categorically vanish as soon as $x \in \X \setminus \Lambda$, in which case we refer to
\[
	\widehat z_\Lambda(x, z) :  = z(x) \rho_\Lambda(x, z)
\]
as an \emph{effective activity}, a notion taken from, e.g., \cite{ref:scott-sokal, ref:jansen-tsagkarogiannis, ref:jansen-hierarchical, ref:jansen-neumann}. 
This is a slight departure from the standard notation due to convenience here.

\subsection{The Dobrushin-Sokal criterion}

The simplicity of $V$ means that, for every $x \in \X$, we can write
\[
	Z(z) = \sum_{X \Subset \X} z^X \kappa(X) = \sum_{X \Subset \X \setminus \{x\}} z^X \kappa(X) + z(x) \sum_{Y \Subset \X \setminus \{x\}} z^Y \kappa(\{x\} \cup Y).
\]
The second identity is formally equivalent to the observation that
\[
	Z_{\{x\} \cup \Lambda}(z) = Z_\Lambda(z) (1 + \widehat z_\Lambda(x, z)),
\]
for every $x \in \X$ and $\Lambda \subset \X \setminus \{x\}$, which comes with the two related formulas
\[
	M_{\{x\} \cup \Lambda}(z) - M_\Lambda(z) = \sum_{m \in \N} \frac{(-1)^{m-1}}{m}\widehat z_\Lambda(x, z)^m = \log(1 + \widehat z_\Lambda(x, z))
\]
and
\[
	z(x) \rho_{\{x\} \cup \Lambda}(x, z) = \widehat z_\Lambda(x, z) \sum_{m \in \N_0} (-1)^m \widehat z_\Lambda(x, z)^m = \frac{\widehat z_\Lambda(x, z)}{1 + \widehat z_\Lambda(x, z)}.
\]
Note that the above logarithmic and geometric series have radius of convergence equal to 1, indicating why we aim for bounds of the form $|\widehat z_\Lambda(x, z)| < 1$ upon evaluation in $\C$ below.

Fix two functions $r : \X \to [0, 1)$ and $\alpha : \X \to \R_+ := [0, +\infty)$ related by the two equivalent pointwise identities $r = \frac{\alpha}{1 + \alpha}$ and $\alpha = \frac{r}{1 - r}$. 
We call $\Lambda \subset \X$ $r$-finite if the mutually equivalent conditions
\[
	r[\Lambda] := \sum_{x\in\Lambda} r(x) < \infty \quad \text{and} \quad (1 + r)^\Lambda := \prod_{x\in \Lambda} (1+r(x)) < \infty
\]
hold. 
These sums and products a priori exist in $\overline \R_+:=[0,+\infty]$ by monotone convergence. 
Analogous notation, particularly $\alpha^S = \prod_{x\in S} \alpha(x) \in \R_+$ for $S\Subset X$, is frequently used in the following. 
Note also that
\[
	(1 - r)^\Lambda = \frac{1}{(1 + \alpha)^\Lambda} > 0 \quad \text{and} \quad - \log(1 - r)[\Lambda] = \log(1 + \alpha)[\Lambda] < \infty,
\]
where $1/\infty = 0$, are each equivalent to $\Lambda$ being $r$-finite.

\begin{theorem} \label{thm:dobrushin-sokal}
	Let $z : \X \to \C$ and suppose that $V$ is a simple interaction potential with Mayer function $f = \e^{-V} - 1$ and Ursell function $\varphi$ satisfying
	\be \label{eq:dobrushin-sokal}
		|z(x)| \prod_{X \Subset \X : x \in X} \max\{|\e^{-V(X)}|, 1 + |f(X)| \alpha^S \mid \varnothing \neq S \subset X \setminus \{x\}\} \leq r(x)
	\ee
	for all $x \in \X$. 
	Then, for all $x \in \X$, we have
	\be \label{eq:zhat-leq-r}
		\sup_{\Lambda \subset \X \setminus \{x\}}|\widehat z_\Lambda(x, z)| \leq \sum_{\mathbf n \in \mathbf N(\X \setminus \{x\})} \frac{|z|^{\delta_x + \mathbf n}}{\mathbf n!} |\varphi(\delta_x + \mathbf n)| \leq r(x) < 1
	\ee
	and
	\[
		\sup_{\Lambda \subset \X} |z(x) \rho_\Lambda(x, z)| \leq \sum_{\mathbf n \in \mathbf N} \frac{|z|^{\delta_x + \mathbf n}}{\mathbf n!} |\varphi(\delta_x + \mathbf n)| \leq \alpha(x) < \infty.
	\]
\end{theorem}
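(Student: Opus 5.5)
We will prove \eqref{eq:zhat-leq-r} by a Kirkwood--Salsburg-type induction on the number of particles, carried out for the absolutely majorised series
\[
	R_\Lambda(x) := \sum_{\mathbf n \in \mathbf N(\Lambda)} \frac{|z|^{\delta_x + \mathbf n}}{\mathbf n!} |\varphi(\delta_x + \mathbf n)|, \qquad x \notin \Lambda.
\]
We first reduce to finite $\Lambda$ by monotone convergence. We also truncate in the total particle number $N$, so the claim becomes $R^{(\le N)}_\Lambda(x) \le r(x)$ for all $N$, all finite $\Lambda \subset \X\setminus\{x\}$ and all $x$. The first inequality in \eqref{eq:zhat-leq-r} is immediate from the triangle inequality once absolute convergence is known. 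The second statement then follows from the first. Indeed, $z(x)\rho_{\{x\}\cup\Lambda}(x,z) = \widehat z_\Lambda/(1+\widehat z_\Lambda)$, and the corresponding absolute series over $\mathbf N(\X)$ is dominated by the series of $\sum_m R^m$, which is at most $r/(1-r) = \alpha$. The simplicity of $V$ matters here: since $\varphi(2\delta_x+\mathbf n)$ arises from the pure hard-core expansion $\kappa(2\delta_x+\cdot)=0$, the correlation including $x$ is exactly the geometric resummation. To make this rigorous, I would apply the factorisation identity formally to the modulus-majorant series, or directly bound $\varphi(\delta_x+\mathbf n)$ with $n_x\ge1$ through the tree-graph/Penrose-type resolution of the $x$-copies.

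The main step is a Kirkwood--Salsburg identity for the root $x$. Expanding $\varphi(\delta_x + \mathbf y)$ over cluster hypergraphs on $\{0\}\cup J$, we isolate the set $\mathfrak e$ of edges containing the root $0$. Removing the root leaves the remaining vertices split into connected components. Each component must be hit by some edge in $\mathfrak e$, and each component is a cluster that can be rooted at a vertex adjacent to $x$ via that edge. Here we use the paper's partition scheme, Lemma~\ref{lem:spanning-hypergraph-partition-scheme}. It assigns the covering by root edges to disjoint pieces $S \subset X\setminus\{x\}$ for each root edge $X$, so that the weight factorises as
\[
	\prod_{X \ni x} \Bigl( 1 + f(X) \cdot (\text{rooted clusters attached at } S\subset X\setminus\{x\}) \Bigr),
\]
with the clusters hanging from $S$ living in a reduced volume. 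Given the shape of \eqref{eq:dobrushin-sokal}, the term for $S$ would be bounded by $|f(X)|\prod_{y\in S} \widehat R(y)$. Here $\widehat R(y)$ is the majorant of the effective activity of $y$ including the geometric resummation over multiple visits, and it is therefore bounded by $\alpha(y)$ via the induction hypothesis $R\le r$. The competing term $|\e^{-V(X)}|$ in the maximum handles the case where the edge $X$ is not used: its weight is $1+f = \e^{-V}$. The per-edge factor should then be exactly one of the two options in the max, up to a choice depending on which $S$ is active. Multiplying over $X\ni x$ and by $|z(x)|$ then gives $R^{(\le N+1)}_\Lambda(x)\le r(x)$.

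The obstacle is making the partition scheme produce disjointness. Different root edges must receive disjoint sets $S$, and the hanging clusters attached to different points of $S$ must be controlled by independent effective activities in nested volumes, with $\Lambda$ shrinking as in $\Lambda\setminus\{\text{earlier points}\}$. This is what prevents overcounting and turns the sum into a product of maxima rather than sums. I would first try ordering $\Lambda$ and the edges containing $x$. Each vertex is assigned to the first root edge containing it. The cluster decomposition of $\varphi$ restricted to the remaining vertices then yields a product of effective-activity series, each majorised by $\alpha(y) = r(y)/(1-r(y))$ after summing multiplicities, and the induction closes.
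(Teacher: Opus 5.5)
Your diagrammatic decomposition (root edges spanning a set $K$ of root neighbours, the remaining vertices grouped into clusters that each meet $K$) is exactly the Kirkwood--Salsburg identity $\psi(x,\mathbf y)=\sum_{K\subset J}\gamma(x,\mathbf y_K)\,\psi(\mathbf y_K,\mathbf y_{J\setminus K})$. Assigning to each root edge the vertices it newly covers is also how the paper uses Lemma~\ref{lem:spanning-hypergraph-partition-scheme} to bound the kernel.

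The gap is the step where you bound what hangs off $K$ by a product of single-site effective-activity majorants in nested volumes $\Lambda\setminus\{\text{earlier points}\}$. A component of the root-deleted hypergraph may meet $K$ in several points. What you actually need is therefore a bound on the multi-rooted majorant $\sum_{\mathbf n}\frac{|z|^{\mathbf n}}{\mathbf n!}|\psi(K,\mathbf n)|$. The only available factorisation is the conditional one, $\psi(\mathbf m+\mathbf n,\bullet\mid\mathbf b)=\psi(\mathbf m,\bullet\mid\mathbf n+\mathbf b)\ast\psi(\mathbf n,\bullet\mid\mathbf b)$. It gives $\xi_\Lambda(K)\le\prod_{y\in K}\xi_\Lambda(y\mid\{y'\in K\mid y'\prec y\})$, a product of correlations of the system \emph{conditioned} on particles at the earlier points.

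Replacing conditioning by deleting sites from the volume is Dobrushin's telescoping, and it is exact only for pairwise hard cores. With multi-body interactions, a particle at $y'$ makes every edge $X\ni y'$ contribute a new lower-order interaction on $X\setminus\{y'\}$ to the conditional potential $V(\bullet\mid y')$; a three-body hard core, for instance, becomes a two-body one. Soft terms also rescale activities. So the conditioned system is not the original one on a smaller volume. Your induction hypothesis $R\le r$, which concerns only the unconditioned system, does not control these factors, and the product over independent effective activities is unjustified even for hypergraph independence polynomials.

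The missing idea is to carry boundary conditions through the induction. You need to prove $\xi_\Lambda(\mathbf m\mid B)\le(\alpha\1_\Lambda+r\1_{\X\setminus\Lambda})^{\mathbf m}\1_{\mathbf F_{\mid B}}(\mathbf m)$ for \emph{every} finite $B$, which requires bounding the conditional kernel $|z(x)|\sum_Y|\gamma(x,Y\mid B)|\alpha^Y$ uniformly in $B$. This is what forces the shape of \eqref{eq:dobrushin-sokal}. When the partition scheme runs on $\mathfrak S_{Y\mid B}$:
\begin{itemize}
\item an edge meeting $B$ is credited only with $\alpha^S$ for $\varnothing\neq S\subset e\setminus B$;
\item edges inside $B$ enter $\kappa(x\mid B)$ through the factor $|\e^{-V}|$;
\item $\mathrm m(x,e\mid B)\le\mathrm m(x,e\mid\varnothing)$, because the maximum runs over fewer $S$.
\end{itemize}
Hence the unconditional criterion dominates all the conditional ones.

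Once this is in place, your truncation in particle number would work as well as the paper's induction over finite $\Lambda$, since each conditional factor carries fewer non-root particles. For the second statement, the cleanest justification of your geometric resummation is \eqref{eq:D-phi-hard-core-clique} taken with absolute values.
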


We then use the effective activity bounds to control partition functions and Mayer series as indicated above. 

\begin{cor}	\label{cor:Z-M-r-bounds}
	Let $z : \X \to \C$ and suppose that $V$ is a simple interaction potential with Boltzmann factor $\kappa$ and Ursell function $\varphi$ satisfying \eqref{eq:zhat-leq-r} for every $x \in \X$. Then, for every $r$-finite $\Lambda \subset \X$, we have
	\[
		0 < (1 - r)^\Lambda \leq |Z_\Lambda(z)| \leq \sum_{X \Subset \Lambda} |z|^X |\kappa(X)| \leq (1 + r)^\Lambda < \infty,
	\]
	\[
		|M_\Lambda(z)| \leq \sum_{\mathbf n \in \mathbf N(\Lambda)} \frac{|z|^\mathbf n}{\mathbf n!} |\varphi(\mathbf n)| \leq - \log(1 - r)[\Lambda] < \infty
	\]
	and, in particular, $Z_\Lambda(z) = \exp(M_\Lambda(z)) \in \C \setminus \{0\}$.
\end{cor}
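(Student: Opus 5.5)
We prove Corollary~\ref{cor:Z-M-r-bounds} by adding sites one at a time, using the telescoping identities stated just before Theorem~\ref{thm:dobrushin-sokal}. The lower and upper bounds are handled separately and rest on different series. For the lower bound on $|Z_\Lambda(z)|$ and the bound on $|M_\Lambda(z)|$ we work with absolutely convergent series built from $|\varphi|$. For the upper bound we work with $|\kappa|$.

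\emph{Finite $\Lambda$, via $M$.} Enumerate $\Lambda = \{x_1, \dots, x_N\}$ and set $\Lambda_k := \{x_1, \dots, x_k\}$. Split each $\mathbf n \in \mathbf N(\Lambda)\setminus\{0\}$ according to the largest index $k$ with $n_{x_k} \geq 1$. Then $\mathbf n = m\delta_{x_k} + \mathbf n'$ with $m \geq 1$ and $\mathbf n' \in \mathbf N(\Lambda_{k-1})$, so $\sum_{\mathbf n \in \mathbf N(\Lambda)} \frac{|z|^\mathbf n}{\mathbf n!}|\varphi(\mathbf n)|$ decomposes into $N$ pieces. The $k$-th piece is the absolute version of $M_{\Lambda_k} - M_{\Lambda_{k-1}}$.

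The key estimate is that this piece is at most $-\log(1-r(x_k))$. We obtain it from the structure behind $\log(1+\widehat z)$. Formally,
\[
M_{\{x\}\cup\Lambda'} - M_{\Lambda'} = \sum_{m\ge1} \frac{(-1)^{m-1}}{m}\widehat z_{\Lambda'}(x,z)^m ,
\]
and the coefficient of each monomial $z^{m\delta_x+\mathbf n'}$ in $M$ arises from this expansion. Replacing every quantity by its absolute value therefore gives the bound
\[
\sum_{m\ge1}\frac{1}{m}\Bigl(\sum_{\mathbf n}\frac{|z|^{\delta_x+\mathbf n}}{\mathbf n!}|\varphi(\delta_x+\mathbf n)|\Bigr)^m \le \sum_{m \geq 1} \frac{r(x)^m}{m} = -\log(1-r(x)),
\]
where the middle inequality uses \eqref{eq:zhat-leq-r}. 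Strictly, one applies this to the majorant series: the monomials of $\widehat z^m$ cover those of the difference with nonnegative weights $\ge$ the absolute coefficients. This inequality needs care, since cancellations go the right way only after taking absolute values termwise. The cleanest route is probably to note that $|\varphi|$-sums are dominated via the tree-graph or Kirkwood--Salsburg majorant used in proving Theorem~\ref{thm:dobrushin-sokal}. If that is unavailable at this point, I would instead bound $|M_\Lambda(z)|$ directly as $\sum_k |\log(1+\widehat z_{\Lambda_{k-1}}(x_k,z))| \le \sum_k -\log(1-|\widehat z|)$. This gives the stated bound on $|M_\Lambda|$, though not on the absolute series.

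\emph{Lower bound and identity.} By the first inequality in \eqref{eq:zhat-leq-r}, $|\widehat z_{\Lambda_{k-1}}(x_k,z)| \le r(x_k) < 1$, so the series defining it converges absolutely. The product formula $Z_{\Lambda_k} = Z_{\Lambda_{k-1}}(1+\widehat z_{\Lambda_{k-1}}(x_k,z))$ then holds numerically, and $|1+\widehat z| \ge 1-r(x_k)$. Multiplying these factors gives $|Z_\Lambda(z)| \ge (1-r)^\Lambda$. Since $Z_\Lambda = \exp(M_\Lambda)$ holds formally and both series converge absolutely, the identity also holds numerically.

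\emph{Upper bound.} Use simplicity of $V$ and split off $x_k$:
\[
\sum_{X\subset\Lambda_k}|z|^X|\kappa(X)| = \sum_{X\subset\Lambda_{k-1}}|z|^X|\kappa(X)| + |z(x_k)|\sum_{Y \subset \Lambda_{k-1}} |z|^Y|\kappa(\{x_k\}\cup Y)|.
\]
It would suffice to show that the second term is at most $r(x_k)$ times the first. I expect this to be the main obstacle. I would write $\kappa(\{x\}\cup Y) = \sum_{P\ni x}\varphi(\{x\}\cup P')\,\kappa(Y\setminus P')$ by splitting off the cluster containing $x$. Taking absolute values gives a bound by $\bigl(\sum_{\mathbf n}\frac{|z|^{\delta_x+\mathbf n}}{\mathbf n!}|\varphi(\delta_x+\mathbf n)|\bigr)\sum_{Y}|z|^Y|\kappa(Y)|$, which is at most $r(x)$ times the first term by the middle inequality in \eqref{eq:zhat-leq-r}. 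This yields $(1+r)^\Lambda$ by induction.

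\emph{Infinite $r$-finite $\Lambda$.} Pass to the limit along an exhausting sequence of finite subsets. The absolute series increase monotonically and are bounded by the convergent quantities $(1+r)^\Lambda$ and $\log(1+\alpha)[\Lambda]$. Dominated convergence transfers the bounds, and the lower bound persists because $(1-r)^{\Lambda_k} \ge (1-r)^\Lambda > 0$.
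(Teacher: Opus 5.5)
Your proposal is correct and follows the paper's proof essentially step by step:
\begin{itemize}
\item you add sites one at a time;
\item $D^x\kappa=\kappa\ast D^x\varphi$ (splitting off the cluster containing $x$) gives the $(1+r)^\Lambda$ bound;
\item $Z_{\Lambda_k}=Z_{\Lambda_{k-1}}(1+\widehat z_{\Lambda_{k-1}}(x_k,z))$ gives the lower bound;
\item you decompose $\mathbf N(\Lambda)$ by largest occupied site, with each Mayer increment bounded by $\sum_{m}r(x)^m/m$;
\item monotone and dominated convergence handle $r$-finite $\Lambda$.
\end{itemize}

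Your hesitation about the Mayer increment is unfounded, and you should drop the detours. Comparing coefficients of $z(x)^m z^{\mathbf n'}$ in the formal identity is exactly the paper's \eqref{eq:D-phi-hard-core-clique}. Since $x\notin\Lambda'$, each such monomial comes only from the $m$-th power of $\widehat z_{\Lambda'}(x,z)$, so the triangle inequality applies directly. The majorant from the proof of Theorem~\ref{thm:dobrushin-sokal} is not available under the corollary's hypothesis alone, and your fallback would not control the absolute series.
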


Together, Theorem~\ref{thm:dobrushin-sokal} and Corollary~\ref{cor:Z-M-r-bounds} constitute a stronger version of the main result of Part~I. 
While the condition \eqref{eq:dobrushin-sokal} is the same, the conclusions regarding the absolute convergence of cluster expansions, in particular that of the Mayer series, are new. 
The cluster expansion setting is also what easily allows the extension to arbitrary reference volumes $\Lambda \subset \X$ in Theorem~\ref{thm:dobrushin-sokal} and at least $r$-finite ones in Corollary~\ref{cor:Z-M-r-bounds}.

\subsection{A few comments on our main result}

The initial term $|\e^{-V(X)}|$ inside each maximum in \eqref{eq:dobrushin-sokal} corresponds to $S = \varnothing$. 
For $X = \{x\}$, it is the only term and is traditionally absorbed into $z(x)$. 
For $X \supsetneq \{x\}$, the term is obviously irrelevant to the maximum whenever the potential $V(X)$ is repulsive in the sense that $|e^{-V(X)}| \leq 1$. 
Our titular naming choice of the criterion \eqref{eq:dobrushin-sokal} comes from the fact that, for repulsive pair potentials, i.e., $\Re V \geq 0$ and $V(X) = V(X) \1_{\{|X| = 2\}}$ for all $X \Subset \X$, it turns into
\[
	|z(x)| \prod_{y \in \X \setminus \{x\}} (1 + |f(\delta_x + \delta_y)| \alpha(y)) \leq r(x) = \frac{\alpha(x)}{1 + \alpha(x)},
\]
which is just another way of writing Sokal's condition from \cite{ref:sokal, ref:scott-sokal}. 
Dobrushin had previously established this condition for hard-core pair interactions, where $|f(\delta_x+\delta_y)| \in \{0,1\}$ for all $x, y \in \X$, see \cite{ref:dobrushin-semi-invariants, ref:dobrushin-saint-flour}.
Observe also that, even in the multi-body case, \eqref{eq:dobrushin-sokal} just becomes
\[
	|z(x) \e^{-V(x)}| (1 + \alpha(x)) \prod_{X \Subset \X : \{x\} \subsetneq X} (1 + |f(X)| \alpha^{X \setminus \{x\}}) \leq \alpha(x)
\]
whenever $\alpha \geq 1$, i.e, $r \geq 1/2$, pointwise.

Dobrushin's condition is itself a refinement of the one Koteck{\'y} and Preiss introduced in \cite{ref:kotecky-preiss}, also dealing only with hard-core pair potentials. 
Later extensions to repulsive and even stable pair interactions were made by Ueltschi and then Poghosyan and Ueltschi in \cite{ref:ueltschi} and \cite{ref:poghosyan-ueltschi}, respectively. 
A priori stability assumptions are ubiquitous in equilibrium statistical mechanics, cf., e.g., \cite{ref:ruelle}, \cite{ref:gallavotti-miracle-sole, ref:gallavotti-miracle-sole-robinson}, \cite{ref:poghosyan-ueltschi, ref:procacci-yuhjtman, ref:jansen-cluster, ref:jansen-kolesnikov}. 
They are used to control non-repulsive parts of $V$. 
Here we note that \eqref{eq:dobrushin-sokal} implies the (local) stability condition
\[
	|z(x) \e^{-V(x)}| \e^{c(x)} = |z(x) \e^{-V(x)}| \prod_{X \Subset \X : \{x\} \subsetneq X} \max\{|\e^{-V(X)}|, 1\} \leq r(x)
\]
with $c(x) := \sum_{X \Subset \X : \{x\} \subsetneq X} \max\{-\Re V(X), 0\}$. 
In fact, we do not see room for direct improvement of our main result through intermediate stability bounds. 
However, for a close comparison with extensions of the Koteck{\'y}--Preiss condition, let $V(x) = 0$ for all $x \in \X$ and suppose that $\alpha = |z| \e^{c + a}$ with $a : \X\to\R_+$. 
In this case, \eqref{eq:dobrushin-sokal} holds whenever
\[
	|z(x)| e^{c(x) + a(x)} + \sum_{X \Subset \X : \{x\} \subsetneq X} |f(X)| \max_{S \subset X \setminus \{x\} : S \neq \varnothing}\{ (|z| \e^{c + a})^S\} \leq a(x)
\]
and, for pairwise interactions, this essentially reads like the condition from \cite{ref:poghosyan-ueltschi}, cf.\ also \cite{ref:jansen-kolesnikov}. 
The summand $|z(x)| e^{c(x)+a(x)}$ implicitly has the prefactor $|f(2\delta_x)| = 1$. 
Returning to the formulation in terms of $\alpha$, a well-known optimisation argument, omitted here but included in Part~I for the reader's convenience, yields the following.

\begin{cor}
	Let $z : \X \to \C$ and suppose that $V$ is a simple interaction potential with Mayer function $f = \e^{-V} - 1$ satisfying $V(x)=0$ as well as
	\[
		|z(x)| \leq \e^{-c(x)-1} C^{-1}
	\]
	for all $x \in \X$, where $C : = \sup\{1 + \sum_{X \Subset \X : \{y\} \subsetneq X} |f(X)| \mid y \in \X\} < \infty$. Then the conclusions of Theorem~\ref{thm:dobrushin-sokal} hold with constant $\alpha = C^{-1} \leq 1$.
\end{cor}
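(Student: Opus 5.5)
We plan to deduce the statement directly from Theorem~\ref{thm:dobrushin-sokal}. The only task is to check that condition \eqref{eq:dobrushin-sokal} holds with the constant choice $\alpha = C^{-1}$, so that $r = \frac{\alpha}{1+\alpha} = \frac{1}{1+C}$. Since $C \geq 1$, we have $\alpha \leq 1$. Hence for every non-empty $S \subset X\setminus\{x\}$ we get $\alpha^S \leq \alpha$, with equality for singletons $S$. This means the maximum over $S$ in \eqref{eq:dobrushin-sokal} is attained at a singleton and equals $1 + |f(X)|\alpha$, provided we can dispose of the term $|\e^{-V(X)}|$ for $X \supsetneq \{x\}$. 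For that we use $|\e^{-V(X)}| \leq \max\{1, |\e^{-V(X)}|\} = \e^{\max\{-\Re V(X), 0\}}$.

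Concretely, for $X = \{x\}$ the factor is $|\e^{-V(x)}| = 1$ because $V(x)=0$. For $X \supsetneq \{x\}$ we bound each factor by
\[
	\max\{|\e^{-V(X)}|, 1 + |f(X)|\alpha\} \leq \e^{\max\{-\Re V(X),0\}}\,(1 + |f(X)|\alpha).
\]
This holds because both entries of the maximum are dominated by the right-hand side, each factor being at least $1$. Taking the product over all $X \ni x$ with $X \neq \{x\}$ and using $1+t \leq \e^t$, the left side of \eqref{eq:dobrushin-sokal} is at most
\[
	|z(x)|\, \e^{c(x)} \exp\Bigl(\alpha \sum_{X \Subset \X : \{x\}\subsetneq X} |f(X)|\Bigr) \leq |z(x)|\,\e^{c(x)}\,\e^{\alpha (C-1)}.
\]
With $\alpha = C^{-1}$ this becomes $|z(x)|\e^{c(x)}\e^{1 - 1/C}$. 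Inserting the hypothesis $|z(x)| \leq \e^{-c(x)-1}C^{-1}$ yields the upper bound $C^{-1}\e^{-1/C}$.

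It remains to compare $C^{-1}\e^{-1/C}$ with $r = \frac{C^{-1}}{1 + C^{-1}}$. This reduces to $\e^{-\alpha} \leq \frac{1}{1+\alpha}$, i.e.\ $1+\alpha \leq \e^{\alpha}$, which always holds. Hence \eqref{eq:dobrushin-sokal} is satisfied for all $x$, and Theorem~\ref{thm:dobrushin-sokal} gives the conclusions with $\alpha = C^{-1} \leq 1$.

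The only delicate step is the handling of the maximum. We must make sure the non-repulsive part enters only through $\e^{c(x)}$ and is not counted twice. The factorised bound above achieves this, because $\max\{a,b\} \leq \max\{a,1\}\, b$ whenever $b \geq 1$. The rest is the standard optimisation.
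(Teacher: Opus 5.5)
Your proof is correct and follows essentially the route the paper indicates, though the paper itself omits the argument and refers to Part~I. You pull out the non-repulsive part as $\e^{c(x)}$, use $\alpha^S \leq \alpha$ for constant $\alpha \leq 1$, apply $1+t \leq \e^t$ to get $\e^{\alpha(C-1)}$, and then carry out the standard optimisation. The only difference is that you verify condition \eqref{eq:dobrushin-sokal} directly at the optimal value $\alpha = C^{-1}$, instead of deriving that value by maximising $\alpha \e^{-\alpha C}$.
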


As far as we can tell, this result is new in the case of multi-body potentials. 
For pairwise interactions, it is essentially a classic, see, e.g., \cite{ref:ruelle}, and applies to more abstract settings like in \cite{ref:poghosyan-ueltschi, ref:jansen-kolesnikov}.

A refinement of Dobrushin's condition and that of Koteck{\'y}--Preiss for hard-core pair interactions was provided by Fern{\'a}ndez and Procacci in \cite{ref:fernandez-procacci}, see also \cite{ref:bissacot-fernandez-procacci} and, for extensions to general pairwise repulsions and stable pair potentials, \cite{ref:faris, ref:jansen-cluster, ref:jansen-kolesnikov}. 
We only use one of its corollaries in the polymer reformulation of our lattice gas below. 
Concise extensions to multi-body potentials seem out of reach so far.

\subsection{The subset polymer gas}

The polymer expansion consists in rewriting a wide variety of lattice systems, say, on $\X$, as a classical lattice gas on the countable set
\[
	\P : = \P(\X) : = \mathbf F(\X) \setminus \{\varnothing\} \subset \mathbf N(\X)
\]
of \emph{polymers} over $\X$. 
The relevant Boltzmann factor $\kappa_\P : \mathbf N(\P) \to \{0, 1\}$ is canonically given by the indicator of pairwise non-overlap, i.e.,
\[
	\kappa_\P (\mathbf P) : = \1_\mathbf F({\textstyle \sum_{j \in J} P_j}) = \1_{\{\forall j, k \in J : j \neq k \Rightarrow P_j \cap P_k = \varnothing\}} = \sum_{\mathfrak g \in \mathfrak G_J} \prod_{e \in \mathfrak g} (-\1_{\{\bigcap_{j\in e} P_j \neq \varnothing\}})
\]
for all finitely indexed polymer tuples $\mathbf P \in \P^J$, where
\[
	\mathfrak G_\V := \{\mathfrak g \subset \{e \subset \V \mid |e| = 2\}\} \subset \mathfrak H_\V
\]
denotes the set of \emph{standard graphs} on an arbitrary vertex set $\V$. 
$\kappa_\P$ clearly corresponds to the purely pairwise hard-core potential and Mayer function $V_\P : \mathbf N(\P) \to \{0,+\infty\}$, $f_\P := e^{-V_\P}-1 : \mathbf N(\P) \to \{0, -1\}$ with
\[
	V_\P(\delta_P + \delta_Q) := +\infty \1_{\{P \cap Q \neq \varnothing\}} \quad \text{and} \quad f_\P(\delta_P + \delta_Q) = -\1_{\{P \cap Q \neq \varnothing\}}
\]
for all $P, Q \in \P$. 
Since $\varnothing \notin \P$, $V_\P$ is obviously simple in our sense.

Given a collection of \emph{polymer activities} $w = (w(P))_{P \in \P}$,
we then define the a priori formal grand \emph{polymer partition function}
\[
	\Xi(w) : = Z_\P(w) : = \sum_{\mathbf n \in \mathbf N(\P)} \frac{w^\mathbf n}{\mathbf n!} \kappa_\P(\mathbf n) =\sum_{\mathfrak p \Subset \P} w^\mathfrak p \1_{\{\forall P, Q \in \mathfrak p : P \neq Q \Rightarrow P\cap Q = \varnothing\}}
\]
as well as its formal logarithm, the \emph{polymer Mayer series}
\[
	\Omega(w) : = M_\P(w) : = \sum_{\mathbf n \in \mathbf N(\P)} \frac{w^\mathbf n}{\mathbf n!} \varphi_\P(\mathbf n),
\]
where the polymer Ursell function $\varphi_\P : \mathbf N(\P) \to \Z$, defined analogously to $\varphi$, takes its classical form, i.e, for all finitely indexed tuples $\mathbf P \in \P^J$, we have
\[
	\varphi_\P(\mathbf P) = \sum_{\mathfrak h \in \mathfrak C_J} \prod_{e \in \mathfrak h} f_\P(\mathbf P_e) = \sum_{\mathfrak g \in \mathfrak{CG}_J} \prod_{e \in \mathfrak g} (-\1_{\{\bigcap_{j\in e} P_j \neq \varnothing\}}).
\]
Here and below, $\mathfrak{CG}_\V := \mathfrak G_\V \cap \mathfrak C_\V$, which coincides with the set of connected standard graphs whenever $\V\neq\varnothing$. 
Accordingly, the classical way of writing the Mayer series is
\[
	\Omega(w) = \sum_{n\in\N} \frac{1}{n!} \sum_{\mathbf P \in \P^n} \prod_{j=1}^n w(P_j) \sum_{\mathfrak g \in \mathfrak{CG}_n} \prod_{e \in \mathfrak g} (-\1_{\{\bigcap_{j\in e} P_j \neq \varnothing\}})
\]
with the obvious shorthand $\P^n = \P^{\{1,\ldots,n\}}$ and $\mathfrak{CG}_n := \mathfrak{CG}_{\{1,\ldots,n\}}$ for each $n \in \N$. 
Omitting a discussion of (reduced) polymer correlations, restriction to (polymers over) a reference volume $\Lambda \subset \X$ is implemented through
\[
	\Xi_\Lambda(w) : = \Xi(w \1_{\P(\Lambda)}) \quad \text{and} \quad \Omega_\Lambda(w) : = \Omega(w \1_{\P(\Lambda)}).
\]

\subsection{Our polymer expansion result}

For better readability, we assume that our simple potential $V : \mathbf N(\X) \to \C$ satisfies the classical assumption $V(x) = 0$ for all $x \in \X$. 
This essentially allows us to treat $V$ as a function on $\{e \Subset \X \mid |e| \geq 2\}$ and to ultimately ignore polymers of size $1$ (``monomers''). 
In this setting, we formally have
\begin{align*}
	Z_\Lambda(z) &= (1+z)^\Lambda \Xi_\Lambda(({\textstyle \frac{z}{1+z}})^\bullet \varphi \1_{\{|\bullet| \geq 2\}}) \\
	&= (1+z)^\Lambda \sum_{\mathfrak p \Subset \P} \prod_{P \in \mathfrak p} (({\textstyle \frac{z}{1+z}})^P \varphi(P) \1_{\{|P| \geq 2\}}) \1_{\{\forall P, Q \in \mathfrak p : P \neq Q \Rightarrow P\cap Q = \varnothing\}}
\end{align*}
for every $\Lambda \Subset \X$, where the sum effectively runs over all finite collections of pairwise disjoint polymers, each of size at least $2$.
For this standard result, we refer the reader to, e.g., \cite[Section~7]{ref:procacci-scoppola}. 

Our new convergence conditions and bounds for the polymer cluster expansion of our lattice gas again use the functions $\alpha : \X \to \R_+$ and $r = \frac{\alpha}{1+\alpha} : \X \to [0,1)$. 
They also feature the product from \eqref{eq:dobrushin-sokal}, here
\[
	\Gamma_A(x) := \prod_{X\Subset\X : \{x\} \subsetneq X} \max\{|\e^{-V(X)}|, 1 + |f(X)| A^S \mid \varnothing \neq S \subset X \setminus \{x\}\},
\]
as a function $\Gamma_A : \X \to [1,\infty]$ depending on an auxiliary $A : \X \to \R_+$, allowed to differ from $\alpha$. 
The hypothesis of Theorem~\ref{thm:dobrushin-sokal} then reads
\[
	|z| \Gamma_\alpha \leq r \quad \text{or} \quad |z| (1+\alpha) \Gamma_\alpha \leq \alpha,
\]
noting that one can absorb $\e^{-V(x)}$ into $z(x)$ for all $x \in \X$ there.

\begin{theorem} \label{thm:dobrushin-sokal-polymer}
	Let $z : \X \to \C \setminus \{-1\}$ and suppose that $V$ is a simple interaction potential satisfying $V(x) = 0$ for all $x \in \X$ as well as
	\be \label{eq:dobrushin-sokal-polymer}
		|{\textstyle \frac{z}{1+z}}| ((1+A)\Gamma_A - A) \leq A \quad \text{and} \quad A {\textstyle \frac{\Gamma_A - 1}{\Gamma_A}} \leq \alpha
	\ee
	for some $A : \X \to \R_+$. 
	Let further $w : \P \to \C$, $P \mapsto (\frac{z}{1+z})^P \varphi(P) \1_{\{|P|\geq 2\}}$. 
	Then, for all simultaneously $|z|$-finite and $r$-finite $\Lambda \subset \X$, we have
	\[
		\frac{Z_\Lambda(z)}{(1+z)^\Lambda} = \exp\left( \sum_{n \in \N} \frac{1}{n!} \sum_{\mathbf P \in \P(\Lambda)^n} \prod_{j=1}^n w(P_j) \sum_{\mathfrak g \in \mathfrak{CG}_n} \prod_{e\in\mathfrak g} (-\1_{\{\bigcap_{j\in e} P_j = \varnothing\}}) \right) \neq 0,
	\]
	i.e., $Z_\Lambda(z) / (1+z)^\Lambda = \Xi_\Lambda(w) = \exp(\Omega_\Lambda(w)) \in \C \setminus \{0\}$. 
	More precisely, for every $r$-finite $\Lambda \subset \X$, one has
	\[
		0 < (1-r)^\Lambda \leq \Xi_\Lambda(-|w|) \leq |\Xi_\Lambda(w)| \leq \Xi_\Lambda(|w|) \leq (1+r)^\Lambda < \infty,
	\]
	\[
		|\Omega_\Lambda(w)| \leq \sum_{\mathbf n \in \mathbf N(\P(\Lambda))} \frac{|w|^\mathbf n}{\mathbf n!} |\varphi_\P(\mathbf n)| = -\Omega_\Lambda(-|w|) \leq - \log (1 - r)[\Lambda] < \infty.
	\]
\end{theorem}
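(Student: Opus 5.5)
The plan is to treat the polymer reformulation as a classical lattice gas on $\P(\X)$ with the purely pairwise hard-core interaction $f_\P$, and to apply a Gruber--Kunz type criterion indexed by sites of $\X$ rather than by polymers. Concretely, set $a := \log(1+\alpha) = -\log(1-r)$ on $\X$. The Gruber--Kunz condition, as a corollary of the Fern\'andez--Procacci criterion cited in the introduction, would then read
\[
	\sum_{P \in \P : x \in P} |w(P)| \e^{a[P]} \leq \e^{a(x)} - 1 = \alpha(x) \qquad \text{for all } x \in \X .
\]
It yields, for every $x$, the bound $\sum_{\mathbf n} \frac{|w|^{\mathbf n}}{\mathbf n!} |\varphi_\P(\mathbf n)| \leq a(x)$, where the sum runs over polymer configurations $\mathbf n \in \mathbf N(\P)$ at least one of whose polymers contains $x$. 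Summing these over $x \in \Lambda$ (each polymer cluster in $\P(\Lambda)$ is counted at least once) gives
\[
	\sum_{\mathbf n \in \mathbf N(\P(\Lambda))} \frac{|w|^\mathbf n}{\mathbf n!} |\varphi_\P(\mathbf n)| \leq \log(1+\alpha)[\Lambda] = -\log(1-r)[\Lambda].
\]
This is the claimed bound on $\Omega_\Lambda$. The equality with $-\Omega_\Lambda(-|w|)$ follows from the classical sign rule $(-1)^{|\mathbf n|-1}\varphi_\P(\mathbf n) \geq 0$ for hard-core pair gases.

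The sandwich for $\Xi_\Lambda$ is obtained from site-wise ratios. Remove the sites of $\Lambda$ one at a time and compare $\Xi_{\Lambda' \cup \{x\}}$ with $\Xi_{\Lambda'}$. The ratio is $1$ plus the effective activity of the ``site'' $x$, namely the sum over polymers containing $x$ of the polymer effective activities. Using the same Gruber--Kunz estimate, this quantity is bounded in modulus by $\alpha(x)/(1+\alpha(x)) = r(x)$. Indeed, the effective polymer activities are dominated by $|w(P)|$ times a factor at most $(1+\alpha)^{P\setminus\{x\}}$, and then a division by $1+\alpha(x)$ comes out of the $\e^{a[P]}$ weighting. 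Monotonicity in $|w|$ and the alternating structure for $-|w|$ give $(1-r)^\Lambda \leq \Xi_\Lambda(-|w|) \leq |\Xi_\Lambda(w)| \leq \Xi_\Lambda(|w|) \leq (1+r)^\Lambda$. The identity $Z_\Lambda(z)/(1+z)^\Lambda = \Xi_\Lambda(w)$ is first proved for $\Lambda \Subset \X$ by the standard Procacci--Scoppola rewriting. It then extends by absolute convergence to $|z|$- and $r$-finite $\Lambda$, since $|z|$-finiteness makes $(1+z)^\Lambda$ converge.

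The main obstacle is verifying the Gruber--Kunz hypothesis
\[
	\sum_{P \ni x, |P| \geq 2} \bigl|{\textstyle\frac{z}{1+z}}\bigr|^P |\varphi(P)| (1+\alpha)^P \leq \alpha(x)
\]
from the two inequalities \eqref{eq:dobrushin-sokal-polymer}. Here I would rerun the proof of Theorem~\ref{thm:dobrushin-sokal}, namely the Kirkwood--Salsburg hierarchy combined with the spanning-hypergraph partition scheme of Lemma~\ref{lem:spanning-hypergraph-partition-scheme}. This time it is applied to the simple-set Ursell function $\varphi$ restricted to subsets with the modified activity $u := |\frac{z}{1+z}|$ and auxiliary function $A$. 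The aim is the intermediate bound
\[
	\sum_{P \ni x, |P|\geq 2} u^{P\setminus\{x\}} |\varphi(P)| \, \textstyle\prod_{y \in P\setminus\{x\}} (\text{weight}) \leq \Gamma_A(x) - 1
\]
in a form where the hypergraph edges at $x$ each contribute a factor $1 + |f(X)|A^S$. The first condition $u((1+A)\Gamma_A - A) \leq A$ should play the role of \eqref{eq:dobrushin-sokal} for the self-consistent bound on the remaining sites. It accounts for the fact that monomers are excluded, i.e.\ the ``$-A$'' corrects for the absent $|P|=1$ term. The second condition $A(\Gamma_A-1)/\Gamma_A \leq \alpha$ then converts the resulting bound $\frac{u(1+A)(\Gamma_A-1)}{\ldots}$ into $\leq \alpha(x)$ once the weights $(1+\alpha)^{P}$ are inserted. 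Getting the bookkeeping of these two conditions to close exactly is where I expect the real work. If it fails directly, I would first try reparametrising via $A' = A\Gamma_A$ per site.
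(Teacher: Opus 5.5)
Your architecture matches the paper's. You reduce to the Gruber--Kunz hypothesis $\sum_{X \Subset \X : x \in X} |w(X)|(1+\alpha)^X \leq \alpha(x)$ of Proposition~\ref{prop:gruber-kunz}, and then verify it with a Kirkwood--Salsburg/partition-scheme bound on $\varphi$ restricted to subsets. Your sketches of the cited consequences, the sign rule, the site-ratio sandwich and the finite-volume polymer identity are fine and amount to citing that proposition. But verifying its hypothesis is the only new step, and you leave it open. Your heuristic for how the two inequalities in \eqref{eq:dobrushin-sokal-polymer} enter is also wrong, so the plan as written would not close. Two ideas are missing.

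First, your unspecified per-site weight must be $1+\alpha$, and it must go into the activity. Run the subset Kirkwood--Salsburg induction with $\zeta := |\frac{z}{1+z}|(1+\alpha)$ rather than $u = |\frac{z}{1+z}|$, and with the single auxiliary function $A$. Because $\varphi(\{x\} \cup Y) = \psi(\delta_x, Y)$ involves only configurations avoiding the root, the in/out step \eqref{eq:xi-x-in-x-out} disappears. Then \eqref{eq:sum-gamma-alpha-leq-Gamma-alpha}, with $A$ in place of $\alpha$, closes the induction under $\zeta\,\Gamma_A \leq A$ and gives $\sum_{X \Subset \X : x \in X} \zeta^X |\varphi(X)| \leq \zeta(x)\Gamma_A(x)$ (Lemma~\ref{lem:dobrushin-sokal-subsets}). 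With activity $u$ you would only get the unweighted bound, which is useless for Gruber--Kunz.

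Second, the second inequality in \eqref{eq:dobrushin-sokal-polymer} is a \emph{lower} bound on $\alpha$, whereas the Gruber--Kunz hypothesis is not monotone in $\alpha$. So you must first decrease $\alpha$ to $A\frac{\Gamma_A-1}{\Gamma_A}$. This is harmless, because a smaller $\alpha$ only strengthens the conclusions and preserves $r$-finiteness.

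With this choice, $\alpha\Gamma_A = A(\Gamma_A-1)$ gives $(1+\alpha)\Gamma_A = (1+A)\Gamma_A - A$, so the first inequality is exactly $\zeta\,\Gamma_A \leq A$. Removing the monomer term $\zeta(x)\varphi(\{x\}) = \zeta(x)$ then yields
\begin{align*}
\sum_{X \Subset \X : x \in X,\, |X| \geq 2} |w(X)|(1+\alpha)^X &= \sum_{X \Subset \X : x \in X,\, |X| \geq 2} \zeta^X|\varphi(X)| \leq \zeta(x)\bigl(\Gamma_A(x)-1\bigr) \\
&= \zeta(x)\Gamma_A(x)\,\frac{\Gamma_A(x)-1}{\Gamma_A(x)} \leq A(x)\,\frac{\Gamma_A(x)-1}{\Gamma_A(x)} = \alpha(x).
\end{align*}
So the ``$-A$'' in the first condition does not correct for the absent monomers, as you suggest; it encodes the root weight $1+\alpha(x)$. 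The monomers are removed by the ``$-1$'' in $\Gamma_A-1$, which the second condition then controls. Accordingly, the quantity to convert is $u(1+\alpha)(\Gamma_A-1)$, not an expression carrying $u(1+A)$, and no reparametrisation $A' = A\Gamma_A$ is needed.
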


Observe first that the latter theorem makes no claims about the absolute convergence of the a priori cluster expansions $M_\Lambda(z)$ and $\rho_\Lambda(\bullet, z)$ for any $\Lambda \subset \X$. 
However, the bounds at the end of Theorem~\ref{thm:dobrushin-sokal-polymer} are still analogous to those inferred in Corollary~\ref{cor:Z-M-r-bounds}.  
Using condition \eqref{eq:dobrushin-sokal-polymer}, where taking $A=\alpha$ makes the second inequality trivial, over Theorem~\ref{thm:dobrushin-sokal}'s $|z| (1+\alpha) \Gamma_\alpha \leq \alpha$ is most clearly advantageous in the case of a ``physical'' activity $z : \X \to \R_+$.

\subsection{Relation to existing results}

For comparisons against more classical results, first consider a physical potential $V$ that implicitly scales with an \emph{inverse temperature} $\beta \in \R_+$ in the sense that $V(X) = V_\beta(X) = \beta V_1(X) \in \R$ for all $X \Subset \X$. 
If $A : \X \to \R_+$ is such that $\Gamma_A < \infty$, then
\[
	1 \leq (1+A) \Gamma_A - A = \Gamma_A + A (\Gamma_A - 1) \to 1 \quad \text{as} \quad \beta \downarrow 0
\]
so, in the high-temperature regime, the more relevant first inequality in \eqref{eq:dobrushin-sokal-polymer} approaches a requirement like $|\frac{z}{1+z}| < A$, provided $A$ and the above convergence are somewhat uniform over $\X$. 
If additionally $A > 1$, then, for sufficiently small $\beta$, the activity region defined by \eqref{eq:dobrushin-sokal-polymer} may therefore encompass all physical choices $z : \X \to \R_+$, indicating an absence of phase transitions. 
With this in mind, we define the function
\[
	D_A : \X \to \overline{\R}_+, \quad x \mapsto \sum_{X \Subset \X : \{x\} \subsetneq X} |V(X)| A^X,
\]
for $A : \X \to [1,\infty)$, which directly inherits the $\beta$-scaling from $V$.

\begin{cor} \label{cor:gallavotti-miracle-sole-robinson-refined}
	Let $z : \X \to \C \setminus \{-1\}$ and suppose that $V$ is a simple interaction potential satisfying $V(x) = 0$ for all $x \in \X$ as well as
	\[
		|{\textstyle \frac{z}{1+z}}| ((1 + A) \e^{D_A/A} - A) \leq A \quad \text{and} \quad A (1-\e^{-D_A / A}) \leq \alpha
	\]
	for some $A : \X \to [1,\infty)$. 
	Let further $w = (\frac{z}{1+z})^\bullet \varphi \1_{\{|\bullet| \geq 2\}} : \P \to \C$. 
	Then the conclusions of Theorem~\ref{thm:dobrushin-sokal-polymer} hold.
\end{cor}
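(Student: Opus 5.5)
The natural route is to deduce the corollary directly from Theorem~\ref{thm:dobrushin-sokal-polymer}. Since $(1+A)\Gamma - A$ and $A(1-\Gamma^{-1})$ are nondecreasing in $\Gamma$, it suffices to prove the pointwise bound
\[
	\Gamma_A(x) \leq \e^{D_A(x)/A(x)} \qquad \text{for all } x \in \X
\]
whenever $A \geq 1$. Substituting this into the two hypotheses of the corollary then yields both inequalities in \eqref{eq:dobrushin-sokal-polymer} with the same $A$, so all conclusions transfer verbatim.

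To prove the bound on $\Gamma_A$, fix $x$ and a factor with $X \supsetneq \{x\}$. Because $A \geq 1$, for every nonempty $S \subset X\setminus\{x\}$ we have
\[
	A^S \leq A^{X\setminus\{x\}} = \frac{A^X}{A(x)}.
\]
Set $t := |V(X)|$ and $a := A^X/A(x) \geq 1$. Since $V$ is real in this comparison setting (and in general one bounds by $|V|$), we estimate each candidate in the maximum. First, $|\e^{-V(X)}| \leq \e^{t} \leq \e^{t a}$. Second, $|f(X)| = |\e^{-V(X)}-1| \leq \e^{t}-1$, and more sharply
\[
	|\e^{-v}-1| \leq \e^{|v|}-1
\]
for complex $v$ by the power series. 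Hence
\[
	1 + |f(X)|A^S \leq 1 + (\e^{t}-1)a \leq \e^{ta},
\]
where the last step uses convexity: the function $g(a) = \e^{ta} - 1 - a(\e^t-1)$ satisfies $g(1) = 0$ and $g'(a) = t\e^{ta} - (\e^t - 1) \geq t\e^t - (\e^t - 1) \geq 0$ for $a \geq 1$. So each factor is at most $\exp(|V(X)| A^X/A(x))$. Taking the product over $X$ gives
\[
	\Gamma_A(x) \leq \exp\Big(\sum_{X \ni x,\, |X|\geq 2} |V(X)|A^X/A(x)\Big) = \e^{D_A(x)/A(x)}.
\]
If $D_A(x) = \infty$, the hypothesis forces $z(x) = 0$ and $\alpha(x) \geq A(x)$; the inequalities still hold with $\Gamma_A(x)$ possibly infinite, interpreted in $\overline{\R}_+$.

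The only delicate point is the elementary inequality $1+(\e^t-1)a \leq \e^{ta}$ for $a \geq 1$, together with the monotonicity check, including the convention at infinity. Everything else is direct substitution into Theorem~\ref{thm:dobrushin-sokal-polymer}.
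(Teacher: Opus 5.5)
Your proposal is correct and is essentially the paper's proof: use $A\geq 1$ to bound each factor of $\Gamma_A(x)$ by $\exp(|V(X)|A^{X\setminus\{x\}})$, deduce $\Gamma_A\leq \e^{D_A/A}$, and transfer the corollary's hypotheses to \eqref{eq:dobrushin-sokal-polymer} by monotonicity in $\Gamma_A$. You also make explicit the elementary inequalities $|\e^{-v}-1|\leq \e^{|v|}-1$ and $1+(\e^{t}-1)a\leq \e^{ta}$ for $a\geq 1$, which the paper uses without comment.
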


\begin{proof}
	Observe that, for all $A : \X \to [1,\infty)$ and every $x \in \X$, we have 
	\[
		\Gamma_A(x) = \prod_{X \Subset \X : \{x\} \subsetneq X} (1+|\e^{-V(X)}-1| A^{X \setminus \{x\}}) \leq \e^{\sum_{X \Subset \X : \{x\} \subsetneq X} |V(X)| A^{X \setminus \{x\}}},
	\]
	the sum in the last exponent being $D_A(x)/A(x)$, and therefore also
	\[
		{\textstyle \frac{\Gamma_A-1}{\Gamma_A}} = 1 - 1/\Gamma_A \leq 1 - \e^{-D_A / A}.
	\]
	Hence, the corollary's hypothesis implies that of Theorem~\ref{thm:dobrushin-sokal-polymer}.
\end{proof}

Corollary~\ref{cor:gallavotti-miracle-sole-robinson-refined} directly improves upon \cite[Theorem~5]{ref:procacci-scoppola} and, although the exact applications are different, also refines the condition in \cite[Theorem~4.2.7]{ref:ruelle}, originally from \cite{ref:gallavotti-miracle-sole-robinson} and itself a slight improvement to Gallavotti and Miracle-Sol{\'e}'s \cite{ref:gallavotti-miracle-sole}. 
These results would respectively require
\[
	|{\textstyle \frac{z}{1+z}}| \exp(4 {D_1} \e^{D_1}) < 1 \quad \text{and} \quad |{\textstyle \frac{z \e^c}{1+z\e^c}}| (2 \exp(\e^{D_1} - 1) - 1) < 1
\]
with $c : \X \to \overline{\R}_+$, $x \mapsto \sum_{X \Subset \X : \{x\} \subsetneq X} \max\{-\Re V(X), 0\}$, as before.

Lastly, let us compare our Theorem~\ref{thm:dobrushin-sokal-polymer} against the specialisation of Nguyen and Fern{\'a}ndez's recent article \cite{ref:nguyen-fernandez} to our lattice gas setup. 
Via the polymer reformulation and the Gruber--Kunz condition, see Proposition~\ref{prop:gruber-kunz}, they essentially derived a criterion requiring $|\frac{z}{1+z}| \leq 1$ as well as
\be \label{eq:nguyen-fernandez}
	|f(e)| \prod_{x \in e} \left( 1 + \sum_{e' \Subset \X : x \in e'} \tau(e') \right) \prod_{e' \Subset \X : e \cap e' \neq \varnothing} (1 + \tau(e')) \leq \tau(e)
\ee
for all non-empty $e \Subset \X$, cf.\ \cite[(107)]{ref:nguyen-fernandez}. 
This is then fashioned into more concrete criteria in terms of the norms $\|V\|_A := \sup \{D_A(x) \mid x\in\X\}$ with $A : \X \to [1,\infty)$. 
For comparison's sake, these criteria are less relevant than an intermediate estimate in their derivation, namely
\[
	\prod_{e' \Subset \X : e \cap e' \neq \varnothing} (1 + \tau(e')) \leq \prod_{x \in e} \prod_{e' \Subset \X : x \in e'} (1 + \tau(e'))
\]
with $x \in e \Subset \X$, cf.\ the middle term in \cite[(109)]{ref:nguyen-fernandez}. 
Replacing the edge-indexed product in this way and setting $\tau(e) := |f(e)| A^e$ for all $e \Subset \X$ with $A \geq 1$ turns \eqref{eq:nguyen-fernandez} into the requirement
\[
	\left( 1 + \sum_{e \Subset \X : x \in e} |f(e)| A^e \right) \prod_{e \Subset \X : x \in e} (1 + |f(e)| A^e) \leq A(x)
\]
for all $x \in \X$. 
Meanwhile, our condition \eqref{eq:dobrushin-sokal-polymer} can be met whenever
\[
	(1 + A(x)) \prod_{e \Subset \X : \{x\} \subsetneq e} (1 + |f(e)| A^{e \setminus \{x\}}) \leq 2 A(x)
\]
with $V(x) = 0$ for all $x \in \X$ and $|\frac{z}{1+z}| \leq 1 \leq A$. 

Roughly speaking, Nguyen and Fern{\'a}ndez's approach in \cite{ref:nguyen-fernandez} may be better for sufficiently weak interactions (high temperature) and is applicable in far broader generality. 
Our method demonstrates how to allow for stronger and, by dropping the requirement $A\geq 1$, even hard-core interactions.

\section{Preliminaries on power series and coefficient operations} \label{sec:prelims-convolution-poisson}

Our power series are mostly of the form
\[
	\pi^z[g] : = \sum_{\mathbf n \in \mathbf N} \frac{z^\mathbf n}{\mathbf n!} g(\mathbf n) = \sum_{n\in\N_0} \frac{1}{n!} \sum_{\mathbf x \in \X^n} \prod_{j=1}^n z(x_j) g(\mathbf x)
\]
where the coefficient $g : \mathbf N \to \A$ takes values in $\A = \C$ or $\A = \overline \R_+$ or sets of formal power series with coefficients in $\C$ or $\overline\R_+$. 
The collection $z = (z(x))_{x \in \X}$ of a priori formal variables may be evaluated as a function $\X \to \A$ in the absence of convergence issues. 
We think of $\pi^z[\bullet]$ as the integral with respect to a non-normalised multivariate \emph{Poisson distribution} with intensity $z$ and employ measure-theoretic conventions when evaluating $z$ more concretely.  
We do not claim any novelty for the contents of this section which are easily extrapolated from sections in, e.g., \cite{ref:ruelle, ref:gallavotti-miracle-sole}. 
Our self-contained reiteration is streamlined to fit our notation.

\subsection{Poisson convolution}

Given $g, h : \mathbf N \to \A$, we can define their \emph{convolution} $g \ast h : \mathbf N \to \A$ by setting
\[
	g \ast h(\mathbf n) := \sum_{\mathbf k \in \mathbf N} \binom{\mathbf n}{\mathbf k} g(\mathbf k) h(\mathbf n - \mathbf k) \quad \text{or} \quad (g \ast h)(\mathbf x) := \sum_{K \subset J} g(\mathbf x_K) h(\mathbf x_{J \setminus K})
\]
for all $\mathbf n \in \mathbf N$ or $\mathbf x \in \X^J$ with finite index set $J$. 
Recalling that
\[
	\binom{\mathbf n}{\mathbf k} : = \prod_{x \in \X} \binom{n_x}{k_x} = \prod_{x \in \X} \frac{n_x!}{k_x!} \1_{\{k_x \leq n_x\}} = \frac{\mathbf n!}{\mathbf k!} \1_{\{\mathbf k \leq \mathbf n\}}
\]
is the multivariate binomial coefficient for $\mathbf n = (n_x)_{x \in \X}, \mathbf k = (k_x)_{x \in \X} \in \mathbf N$, the two definitions are equivalent to the formal identity
\be \label{eq:pi-z-g-ast-h}
	\pi^z[g \ast h] = \pi^z[g] \cdot \pi^z[h].
\ee
In particular, the $\ast$-operation is also associative, commutative and $\A$-bilinear.

More broadly, if $L$ is an arbitrary finite set and $g_l : \mathbf N \to \A$ for every $l \in L$, the convolution $\Conv_{l \in L} g_l : \mathbf N \to \A$ is given by
\[
	\mathbf n \mapsto \sum_{(\mathbf k^{(l)})_{l \in L} \in \mathbf N^L} \binom{\mathbf n}{(\mathbf k^{(l)})_{l \in L}} \prod_{l \in L} g_l(\mathbf k^{(l)}) \quad \text{or} \quad \mathbf x \mapsto \sum_{\mathbf P \in \mathfrak P_J^{(L)}} \prod_{l \in L} g_l(\mathbf x_{P_l})
\]
with $\mathbf n\in\mathbf N$ or finitely indexed $\mathbf x \in \X^J$, respectively, where
\begin{align*}
	\binom{\mathbf n}{(\mathbf k^{(l)})_{l \in L}} : = \prod_{x \in \X} \binom{n_x}{(k^{(l)}_x)_{l \in L}} = \frac{\mathbf n!}{\prod_{l \in L} \mathbf k^{(l)}!} \1_{\{\sum_{l \in L} \mathbf k^{(l)} = \mathbf n\}}
\end{align*}
denotes the multivariate multinomial coefficient and
\[
	\mathfrak P_J^{(L)} : = \{{\textstyle \mathbf P \in \{S \subset J\}^L \mid J = \bigsqcup_{L \in L} P_l}\}
\]
the set of \emph{$L$-indexed partitions} of $J$ into pairwise disjoint but \emph{possibly empty} subsets. 
In the above situation, we likewise formally obtain
\be \label{eq:pi-z-conv-g}
	\pi^z[{\textstyle \Conv_{l \in L} g_l}] = \prod_{l \in L} \pi^z[g_l].
\ee
If $g_l = g : \mathbf N \to \A$ for every $l \in L$, we also write $g^{\ast |L|} = \Conv_{l \in L} g_l$. 
To include the case $L = \varnothing$, we define the empty convolution as $1_\ast : = \1_{\{0\}} : \mathbf N \to \A$, the $\ast$-identity element.

Note also, that for every finite set $L$ and all $g : \mathbf N^L \to \A$, we can define
\be \label{eq:conv-g-generalised}
	G : \mathbf N \to \A, \quad \mathbf n \mapsto \sum_{(\mathbf k^{(l)})_{l \in L} \in \mathbf N^L} \binom{\mathbf n}{(\mathbf k^{(l)})_{l \in L}} g((\mathbf k^{(l)})_{l \in L}),
\ee
i.e., $G = \sum_{(\mathbf k^{(l)})_{l \in L} \in \mathbf N^L} g((\mathbf k^{(l)})_{l \in L}) \Conv_{l \in L} \1_{\{\mathbf k^{(l)}\}}$, and formally, by \eqref{eq:pi-z-conv-g},
\be \label{eq:pi-z-conv-g-generalised}
	\pi^z[G] = \sum_{(\mathbf k^{(l)})_{l \in L} \in \mathbf N^L} \left( \prod_{l \in L} \frac{z^{\mathbf k^{(l)}}}{\mathbf k^{(l)}!} \right) g((\mathbf k^{(l)})_{l \in L}).
\ee
With the implicit measure-theoretic conventions, monotone and dominated convergence yield the following.

\begin{lemma} \label{lem:pi-z-conv-g-generalised}
	Let $z : \X \to \A$, $L$ a finite set, $g : \mathbf N^L \to \A$ and $G : \mathbf N \to \A$ given by \eqref{eq:conv-g-generalised}. 
	Then we have the following:
	\begin{itemize}
		\item[(i)] If $\A = \overline \R_+$, then \eqref{eq:pi-z-conv-g-generalised} holds unconditionally as an identity in $\overline \R_+$.
		
		\item[(ii)] If $\A = \C$, then \eqref{eq:pi-z-conv-g-generalised} holds as an identity in $\C$ whenever
		\[
			\sum_{(\mathbf k^{(l)})_{l \in L} \in \mathbf N^L} \left( \prod_{l \in L} \frac{|z|^{\mathbf k^{(l)}}}{\mathbf k^{(l)}!} \right) |g((\mathbf k^{(l)})_{l \in L})| < \infty.
		\]
	\end{itemize}
\end{lemma}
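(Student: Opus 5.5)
The plan is to expand both sides of \eqref{eq:pi-z-conv-g-generalised} as sums over multi-indices, show that they coincide term by term after a reindexing, and then justify exchanging the order of summation. By definition,
\[
	\pi^z[G] = \sum_{\mathbf n \in \mathbf N} \frac{z^\mathbf n}{\mathbf n!} \sum_{(\mathbf k^{(l)})_{l \in L} \in \mathbf N^L} \binom{\mathbf n}{(\mathbf k^{(l)})_{l \in L}} g((\mathbf k^{(l)})_{l \in L}).
\]
The multinomial coefficient equals $\mathbf n!/\prod_{l} \mathbf k^{(l)}!$ times the indicator of $\sum_l \mathbf k^{(l)} = \mathbf n$. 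On that event $z^\mathbf n = \prod_l z^{\mathbf k^{(l)}}$, since $z^{\mathbf m+\mathbf m'} = z^\mathbf m z^{\mathbf m'}$ holds for finitely supported multi-indices. So each summand of the double sum is exactly
\[
	\Bigl(\prod_{l \in L} \frac{z^{\mathbf k^{(l)}}}{\mathbf k^{(l)}!}\Bigr) g((\mathbf k^{(l)})_{l \in L}) \, \1_{\{\sum_l \mathbf k^{(l)} = \mathbf n\}}.
\]

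The next step is a reindexing. The map $(\mathbf n, (\mathbf k^{(l)})_l) \mapsto (\mathbf k^{(l)})_l$, restricted to the set where $\mathbf n = \sum_l \mathbf k^{(l)}$, is a bijection onto $\mathbf N^L$; its inverse is $(\mathbf k^{(l)})_l \mapsto (\sum_l \mathbf k^{(l)}, (\mathbf k^{(l)})_l)$. Summing the indicator over $\mathbf n$ therefore collapses the double sum to the right-hand side of \eqref{eq:pi-z-conv-g-generalised}. Everything here is a countable sum: $\mathbf N$ is countable because $\X$ is countable and multi-indices have finite support, so $\mathbf N^L$ is countable as well. The only point that needs care is that the iterated sum (first over $\mathbf k$, then over $\mathbf n$) may be rewritten as a single unordered sum over the countable index set.

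For (i), every term lies in $\overline\R_+$ (with $z \ge 0$). Unordered sums of nonnegative extended reals over countable sets are well defined, independent of ordering and grouping, and iterated sums agree with them; this is Tonelli's theorem for counting measure, i.e.\ monotone convergence. The identity therefore holds in $\overline\R_+$ without conditions, including when both sides are $+\infty$. Products like $0\cdot\infty=0$ follow the measure-theoretic convention the paper adopts, and this convention is consistent on both sides because the summands are literally identical.

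For (ii), I will apply (i) to $|z|$ and $|g|$. By the triangle inequality, the absolute values of the summands are dominated by the nonnegative terms whose total is the finite quantity in the hypothesis. Fubini's theorem for counting measure, i.e.\ dominated convergence, then shows three things: the inner sums defining $G(\mathbf n)$ converge absolutely; the outer series $\pi^z[G]$ converges absolutely; and its value equals the absolutely convergent right-hand side. No step here is genuinely hard. The main thing to watch is the formal bookkeeping that $z^\mathbf n/\mathbf n!$ times the multinomial coefficient factorises exactly, together with the countability needed for the Fubini--Tonelli arguments.
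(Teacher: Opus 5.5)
Your proposal is correct and follows the paper's argument, which the paper compresses into the single remark that monotone and dominated convergence apply. In both cases the formal identity is the reindexing $(\mathbf n,(\mathbf k^{(l)})_l)\mapsto(\mathbf k^{(l)})_l$ over the countable set $\mathbf N^L$. It is justified by Tonelli in case (i) and by Fubini applied via (i) to $|z|$ and $|g|$ in case (ii). One small remark: the sums defining $G(\mathbf n)$ need no convergence argument, since they are finite (only finitely many $L$-tuples satisfy $\sum_{l} \mathbf k^{(l)} = \mathbf n$).
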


\subsection{Convolution power series as coefficients}

Given $g : \mathbf N \to \A$ with $g(0) = 0$, we can define the \emph{$\ast$-exponential}
\[
	\e^{\ast g} : = \exp_\ast g : = \sum_{m \in \N_0} \frac{g^{\ast m}}{m!} : \mathbf N \to \A
\]
since, for every finitely indexed tuple $\mathbf x \in \X^J$, we have
\[
	\sum_{m \in \N_0} \frac{g^{\ast m}}{m!}(\mathbf x) = \sum_{m \in \N_0} \frac{1}{m!} \sum_{\mathbf P \in \mathfrak P_J^{(\{1, \ldots, m\})}} \prod_{l=1}^m g(\mathbf x_{P_l}) = \sum_{\mathfrak p \in \mathfrak P_J} \prod_{P \in \mathfrak p} g(\mathbf x_P),
\]
the last sum and product having only finite index sets. 
More explicitly, $g(0) = 0$ makes the inner sum in the middle effectively run over ordered partitions of $J$ into $m$ non-empty subsets and discarding the ordering cancels the factor $1/m!$. 
Convergence being a non-issue, the usual argument yields
\[
	\e^{\ast \sum_{l \in L} g_l} = \Conv_{l \in L} \e^{\ast g_l}
\]
whenever $L$ is a finite index set and $g_l : \mathbf N \to \A$ with $g_l(0) = 0$ for all $l \in L$.

If, e.g., $\A = \C$, then, by the same token, we can define the \emph{$\ast$-logarithm}
\[
	\log_\ast h = \log_\ast(1_\ast + g) : = \sum_{m \in \N} \frac{(-1)^{m-1} g^{\ast m}}{m} : \mathbf N \to \A
\]
and the \emph{$\ast$-reciprocal}, defined via the geometric series,
\[
	h^{\ast -1} = (1_\ast + g)^{\ast -1} : = \sum_{m \in \N_0} (-1)^m g^{\ast m} : \mathbf N \to \A
\]
of $h = 1_\ast + g : \mathbf N \to \A$ with $h(0) = 1$. 
Again, all relevant sums implicitly run over finite index sets when evaluating on some $\mathbf n \in \mathbf N$ so the usual arguments also yield the expected behaviour:
\[
	\log_\ast \e^{\ast g} = g, \quad \e^{\ast \log_\ast h} = h \quad \text{and} \quad (h^{\ast -1}) \ast h = h \ast (h^{\ast -1}) = 1_\ast
\]
as well as
\[
	\log_\ast({\textstyle \Conv_{l \in L} h_l}) = \sum_{l \in L} \log_\ast h_l \quad \text{and} \quad \left( \Conv_{l \in L} h_l \right)^{\ast -1} = \Conv_{l \in L} h_l^{\ast -1}
\]
whenever $L$ is a finite index set and $h_l : \mathbf N \to \A$ with $h_l(0) = 1$ for all $l \in L$. 
Note, that for every finitely indexed $\mathbf x \in \X^J$ with $J \neq \varnothing$, we explicitly obtain
\[
	(\log_\ast h)(\mathbf x) = \sum_{\mathfrak p \in \mathfrak P_J} (-1)^{|\mathfrak p| - 1} (|\mathfrak p| - 1)! \prod_{P \in \mathfrak p} h(\mathbf x_P),
\]
a known M{\"o}bius inversion formula along partitions, following the same argument as above.

Since the above series, with $g : \mathbf N \to \A$, $g(0) = 0$ and $h = 1_\ast + g$, are pointwise finite sums, \eqref{eq:pi-z-conv-g} formally yields
\[
	\pi^z[\e^{\ast g}] = \pi^z[\exp_\ast g] = \sum_{m \in \N_0} \frac{\pi^z[g]^m}{m!} = \exp(\pi^z[g]) = \e^{\pi^z[g]}
\]
and, provided $\A$ is suitable, also
\[
	\pi^z[\log_\ast h] = \sum_{m \in \N} \frac{(-1)^{m-1} \pi^z[g]^m}{m} = \log(1 + \pi^z[g]) = \log(\pi^z[h])
\]
and $\pi^z[h^{\ast -1}] = \sum_{m \in \N_0} (-1)^m \pi^z[g]^m = (1 + \pi^z[g])^{-1} = \pi^z[h]^{-1}$. 
The validity of the latter identities upon evaluation over $\A = \C$ is mediated by the case $\A = \overline \R_+$ via monotone and dominated convergence and Lemma~\ref{lem:pi-z-conv-g-generalised}. 
Treating the first of the above convolution power series
is quite straightforward, given the exponential series's positive coefficients and infinite radius of convergence over $\C$ as well as the convention $\e^{+\infty} = +\infty$. 
For the logarithmic and reciprocal/geometric cases, one needs to observe that
\[
	\log(1 \pm \zeta) = \sum_{m \in \N} \frac{(\mp 1)^{m-1} \zeta^m}{m} \quad \text{and} \quad (1 \pm \zeta)^{-1} = \sum_{m \in \N_0} (\mp 1)^m \zeta^m
\]
as well-behaved identities in $\C$ necessitate $|\zeta| < 1$.

\subsection{Derivation via shifting}

By viewing $\pi^z[g] = \sum_{\mathbf n \in \mathbf N} \frac{z^\mathbf n}{\mathbf n!} g(\mathbf n)$, for given $g : \mathbf N \to \A$, as an abstract Taylor series, one effortlessly sees that
\[
	\frac{\partial}{\partial z(x)} \pi^z[g] = \sum_{\mathbf n \in \mathbf N} \frac{z^\mathbf n}{\mathbf n!} g(\delta_x + \mathbf n) = \pi^z[D^x g]
\]
formally holds with $D^x g : = g(\delta_x + \bullet)$ for every $x \in \X$. 
The Leibniz product rule for the partial derivatives on the left-hand side directly translates into
\be \label{eq:D-g-ast-h}
	D^x (g \ast h) = (D^x g) \ast h + g \ast (D^x h)
\ee
for arbitrary $x \in \X$ and $g, h : \mathbf N \to \A$. 
More generally, for every $\mathbf m \in \mathbf N$ and $g : \mathbf N \to \A$, we can take formal derivatives of higher-order to obtain
\[
	\frac{\partial^{|\mathbf m|}}{(\partial z)^\mathbf m} \pi^z[g] = \sum_{\mathbf n \in \mathbf N} \frac{z^\mathbf n}{\mathbf n!} g(\mathbf m + \mathbf n) = \pi^z[D^\mathbf m g]
\]
with $D^\mathbf m g : = g(\mathbf m + \bullet)$ and, given a finite set $L$ and $g_l : \mathbf N \to \A$ for every $l \in L$, \eqref{eq:D-g-ast-h} generalises to
\be \label{eq:D-conv-g}
	D^\mathbf m \Conv_{l \in L} g_l = \sum_{(\mathbf k^{(l)})_{l \in L} \in \mathbf N^L} \binom{\mathbf m}{(\mathbf k^{(l)})_{l \in L}} \Conv_{l \in L} D^{\mathbf k^{(l)}} g_l.
\ee
Note that the latter can be read as a convolution formula with respect to $\mathbf m$ if $\A$ and its innate multiplication are replaced by $\{g : \mathbf N \to \A\}$ and $\ast$, respectively, and that substituting a finitely indexed tuple $\mathbf x \in \X^J$ for $\mathbf m$ would allow writing the right-hand side of \eqref{eq:D-conv-g} as a sum over $\mathfrak P_J^{(L)}$.

Of particular relevance is the interaction of this derivation with $\ast$-exponentials and $\ast$-logarithms. 
Namely, let $g : \mathbf N \to \A$ with $g(0) = 0$ and verify that $D^x \e^{\ast g} = \e^{\ast g} \ast (D^x g)$ for every $x \in \X$ and, more generally,
\be \label{eq:D-exp}
	D^\mathbf x \e^{\ast g} = \e^{\ast g} \ast \sum_{\mathfrak p \in \mathfrak P_J} \Conv_{P \in \mathfrak p} D^{\mathbf x_P} g
\ee
for every finitely indexed tuple $\mathbf x \in \X^J$. 
If, e.g., $\A = \C$ and $h = 1_\ast + g$, analogous instances of the chain rule and Fa\`a di Bruno's formula include $D^x \log_\ast h = h^{\ast -1} \ast (D^x h)$ for every $x \in \X$ as well as
\begin{equation} \label{eq:D-log}
	D^\mathbf x \log_\ast h = \sum_{\mathfrak p \in \mathfrak P_J} (-1)^{|\mathfrak p| - 1} (|\mathfrak p| - 1)! \Conv_{P \in \mathfrak p} (h^{\ast -1} \ast D^{\mathbf x_P} h)
\end{equation}
whenever $\mathbf x \in \X^J$ with non-empty finite index set $J$.

The derivation operators also provide us with a nice way of framing the superposition principle, which essentially says that Poisson distributions form a measure-theoretic convolution semigroup with respect to the addition of their intensities. 
Given $g : \mathbf N \to \A$, we collect the values
\[
	\Pi^z[g](\mathbf m) : = \pi^z[D^\mathbf m g], \quad \mathbf m \in \mathbf N,
\]
into a formally well-defined function $\Pi^z[g]$ that, by \eqref{eq:D-conv-g} and \eqref{eq:pi-z-conv-g}, satisfies
\[
	\Pi^z[{\textstyle \Conv_{l \in L} g_l}] = \Conv_{l \in L} \Pi^z[g_l]
\]
whenever $L$ is a finite set and $g_l : \mathbf N \to \A$ for all $l \in L$. 
Analogously, \eqref{eq:D-exp} formally yields
\[
	\Pi^z[\e^{\ast g}] = \e^{\pi^z[g]} \cdot \exp_\ast(\Pi^z[g] \1_{\mathbf N \setminus \{0\}})
\]
whenever $g : \mathbf N \to \A$ with $g(0) = 0$, and, if, additionally, $\A$ is suitable and $h = 1_\ast + g$, then \eqref{eq:D-log} translates into
\[
	\Pi^z[\log_\ast h] = \log(\pi^z[h]) \cdot 1_\ast + \log_\ast (\pi^z[h]^{-1} \cdot \Pi^z[h]).
\]
The aforementioned superposition principle can now be phrased as follows: 
if $\tilde z = (\tilde z(x))_{x \in \X}$ is another collection of abstract variables commuting among each other and with those in $z$, then
\be \label{eq:Pi-z-semigroup}
	\Pi^{z + \tilde z}[g](\mathbf m) = \Pi^z[\Pi^{\tilde z}[g]](\mathbf m)
\ee
formally holds for all $g : \mathbf N \to \A$, $\mathbf m \in \mathbf N$. 
For proving the latter identity, it suffices to consider the case $\mathbf m = 0$ since clearly $\Pi^z[g](\mathbf m) = \Pi^z[D^\mathbf m g](0)$ by definition. 
For $\mathbf m = 0$, \eqref{eq:Pi-z-semigroup} reads
\[
	\pi^{z + \tilde z}[g] = \sum_{\mathbf n \in \mathbf N} \frac{(z + \tilde z)^\mathbf n}{\mathbf n!} g(\mathbf n) = \sum_{\mathbf k \in \mathbf N} \frac{z^\mathbf k}{\mathbf k!} \sum_{\tilde{\mathbf k} \in \mathbf N} \frac{\tilde z^{\tilde{\mathbf k}}}{\tilde{\mathbf k}!} g(\mathbf k + \tilde{\mathbf k}) = \pi^z[\Pi^{\tilde z}[g]]
\]
and easily follows from the multivariate binomial formula
\[
	(z + \tilde{z})^\bullet = z^\bullet \ast \tilde{z}^\bullet : \mathbf n \mapsto (z + \tilde z)^\mathbf n = \sum_{\mathbf k \in \mathbf N} \binom{\mathbf n}{\mathbf k} z^\mathbf k \tilde{z}^{\mathbf n - \mathbf k}.
\]
As usual, monotone and dominated convergence allow for according identities after evaluating in the cases $\A = \overline \R_+$ and $\A = \C$.

\begin{lemma} \label{lem:superposition-principle}
	Let $z, \tilde{z} : \X \to \A$, $g : \mathbf N \to \A$ and $\mathbf m \in \mathbf N$. 
	Then we have the following:
	\begin{itemize}
		\item[(i)] If $\A = \overline \R_+$, then \eqref{eq:Pi-z-semigroup} holds unconditionally as an identity in $\overline \R_+$.
		
		\item[(ii)] If $\A = \C$, then \eqref{eq:Pi-z-semigroup} holds as an identity in $\C$ whenever
		\[
			\Pi^{|z| + |\tilde{z}|}[|g|](\mathbf m) = \Pi^{|z|}[\Pi^{|\tilde{z}|}[|g|]](\mathbf m) < \infty.
		\]
	\end{itemize}
\end{lemma}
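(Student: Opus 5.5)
The plan is to prove the formal identity \eqref{eq:Pi-z-semigroup} first, and then obtain both evaluated statements from a single Tonelli/Fubini argument applied to one absolutely convergent double sum. Since $\Pi^z[g](\mathbf m) = \pi^z[D^\mathbf m g]$, and since $D^\mathbf m$ commutes with $\Pi^{\tilde z}$ (that is, $D^\mathbf m \Pi^{\tilde z}[g] = \Pi^{\tilde z}[D^\mathbf m g]$ directly from the definitions), we may replace $g$ by $D^\mathbf m g$ and assume $\mathbf m = 0$ throughout. The identity to prove then reads
\[
	\sum_{\mathbf n \in \mathbf N} \frac{(z+\tilde z)^\mathbf n}{\mathbf n!} g(\mathbf n) = \sum_{\mathbf k \in \mathbf N} \frac{z^\mathbf k}{\mathbf k!} \sum_{\tilde{\mathbf k} \in \mathbf N} \frac{\tilde z^{\tilde{\mathbf k}}}{\tilde{\mathbf k}!} g(\mathbf k + \tilde{\mathbf k}).
\]

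The central step is to expand each $(z+\tilde z)^\mathbf n$ on the left with the multivariate binomial formula. Because $\mathbf n$ has finite support, this is a finite sum. After cancelling $\mathbf n!$ against the binomial coefficient, the left-hand side becomes a sum over pairs $(\mathbf k, \tilde{\mathbf k}) \in \mathbf N^2$ with $\mathbf n = \mathbf k + \tilde{\mathbf k}$, each term equal to $\frac{z^\mathbf k}{\mathbf k!}\frac{\tilde z^{\tilde{\mathbf k}}}{\tilde{\mathbf k}!} g(\mathbf k+\tilde{\mathbf k})$. The map $(\mathbf k,\tilde{\mathbf k}) \mapsto (\mathbf n, \mathbf k)$ with $\mathbf k \le \mathbf n$ is a bijection. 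So both sides are two different iterated summations of the same countable family of terms indexed by $\mathbf N^2$. This is essentially the case $L = \{1,2\}$ of Lemma~\ref{lem:pi-z-conv-g-generalised}, applied with $g(\mathbf k,\tilde{\mathbf k})$ replaced by $z^\mathbf k\tilde z^{\tilde{\mathbf k}}$-weighted values. One could invoke that lemma, but it is just as easy to argue directly.

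For (i), with $\A = \overline\R_+$ every term is nonnegative (possibly $+\infty$, using the convention $0\cdot\infty = 0$). Tonelli's theorem for counting measure on $\mathbf N^2$ then lets us regroup freely, so the identity holds in $\overline\R_+$ with no conditions.

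For (ii), we apply (i) to $|z|$, $|\tilde z|$, $|g|$. The hypothesis says precisely that the family of absolute values $\frac{|z|^\mathbf k}{\mathbf k!}\frac{|\tilde z|^{\tilde{\mathbf k}}}{\tilde{\mathbf k}!}|g(\mathbf k+\tilde{\mathbf k})|$ is summable over $\mathbf N^2$. There is one point to check here: after expansion, the regrouping on the left side involves $|z+\tilde z|^\mathbf n \le (|z|+|\tilde z|)^\mathbf n$. The binomial expansion is applied to $(|z|+|\tilde z|)^\mathbf n$ in the positive case, so the inequality between the regrouped terms goes the right way. Fubini's theorem for summable families then justifies every rearrangement in $\C$, including the absolute convergence of each inner series $\Pi^{\tilde z}[g](\mathbf k)$ for every $\mathbf k$ with nonzero weight. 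Hence both sides are finite and equal.

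The main obstacle is minor: handling terms where $z^\mathbf k = 0$ but the inner sum diverges. Under the measure-theoretic convention these terms are harmless, and the statement should be read with that convention in force.
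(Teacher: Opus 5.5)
Your proposal is correct and follows essentially the same route as the paper. Both reduce to $\mathbf m = 0$ via $\Pi^z[g](\mathbf m) = \Pi^z[D^{\mathbf m} g](0)$, expand $(z+\tilde z)^{\mathbf n}$ by the multivariate binomial formula (the paper's $(z+\tilde z)^\bullet = z^\bullet \ast \tilde z^\bullet$) into a sum over $\mathbf N^2$, and conclude by monotone convergence (Tonelli) in $\overline{\R}_+$ and dominated convergence (Fubini) in $\C$, with your remarks on $D^{\mathbf m}\Pi^{\tilde z}[g] = \Pi^{\tilde z}[D^{\mathbf m} g]$ and null-weight terms making explicit what the paper leaves implicit.
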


As indicated above, the binomial formula is a special instance of our Poisson convolution so the superposition principle, here \eqref{eq:Pi-z-semigroup} / Lemma~\ref{lem:superposition-principle}, and \eqref{eq:pi-z-conv-g-generalised} / Lemma~\ref{lem:pi-z-conv-g-generalised} turn out to be two sides of the same coin.

\section{General correlations and the Kirkwood--Salsburg hierarchy} \label{sec:prelims-conditional-root-KS}

We extend the potential $V$, the Boltzmann factor $\kappa$ and the Ursell function $\varphi$ to functions on $\mathbf N \times \mathbf N$, where the second argument, readily omitted if equal to the mutually identified vacuum configurations $0 \in \mathbf N$, $() \in \X^\varnothing = \X^0$ or $\varnothing \in \mathbf F$, is usually called the \emph{boundary condition}. 
Namely, let
\[
	V : \mathbf N \times \mathbf N \to \C \cup \{+\infty\}, \quad (\mathbf n, \mathbf b) \mapsto V(\mathbf n \mid \mathbf b) : =  \sum_{\mathbf c \in \mathbf N} \binom{\mathbf b}{\mathbf c} V(\mathbf n + \mathbf c),
\]
i.e., $V(\mathbf x \mid \mathbf y) = \sum_{S \subset K} V(\mathbf x + \mathbf y_S)$ for all finitely indexed tuples $\mathbf x \in \X^J$, $\mathbf y \in \X^K$, where the sum of tuples corresponds to their concatenation. The conditional Boltzmann factor is then
\[
	\kappa : \mathbf N \times \mathbf N \to \C, \quad (\mathbf n, \mathbf b) \mapsto \kappa(\mathbf n \mid \mathbf b) : = \e^{- \sum_{\mathbf k \in \mathbf N \setminus \{0\}} \binom{\mathbf n}{\mathbf k} V(\mathbf k \mid \mathbf b)}
\]
An important observation here is the \emph{conditional multiplicativity}
\be \label{eq:D-kappa}
	\kappa(\mathbf m + \mathbf n \mid \mathbf b) = \kappa(\mathbf m \mid \mathbf n + \mathbf b) \kappa(\mathbf n \mid \mathbf b)
\ee
for all $\mathbf m, \mathbf n, \mathbf b \in \mathbf N$. 
Note also that, for all $\mathbf b \in \mathbf N$, we have $\kappa(0 \mid \mathbf b) = 1$ and can therefore extend and rewrite \eqref{eq:kappa-sum-prod-phi} in the forms
\be \label{eq:kappa-exp-phi-phi-log-kappa}
	\kappa(\bullet \mid \mathbf b) = \exp_\ast \varphi(\bullet \mid \mathbf b) \quad \text{and} \quad \varphi(\bullet \mid \mathbf b) : = \log_\ast \kappa(\bullet \mid \mathbf b),
\ee
where $\ast$-exponential and $\ast$-logarithm only act on the respective unspecified first arguments here and the second identity is used as the definition of the conditional Ursell function $\varphi : \mathbf N \times \mathbf N \to \C$.

Recall that we assume simplicity of $V$, meaning $\kappa(\bullet \mid \mathbf b)$ vanishes outside
\[
	\mathbf F_{\mid \mathbf b} : = \{X \Subset \X \mid \mathbf b[X] = 0\} \subset \mathbf F_{\mid 0} = \mathbf F \subset \mathbf N
\]
for any fixed $\mathbf b \in \mathbf N$, i.e., that $\kappa(x \mid x) = \e^{-V(x) - V(2 \delta_x)} = 0$ for all $x \in \X$.

\subsection{Convolutional Boltzmann factors}

For all $\mathbf b \in \mathbf N$, the $\ast$-reciprocal of $\kappa(\bullet \mid \mathbf b) = 1_\ast + \kappa(\bullet \mid \mathbf b) \1_{\mathbf N \setminus \{0\}}$ exists and we can also define
\[
	\psi(\mathbf m, \bullet \mid \mathbf b) : = \kappa(\bullet \mid \mathbf b)^{\ast -1} \ast D^\mathbf m \kappa(\bullet \mid \mathbf b)
\]
for every $\mathbf m \in \mathbf N$, yielding a trivariate function $\psi : \mathbf N \times \mathbf N \times \mathbf N \to \C$. 
Not only is $\psi(0, \bullet \mid \mathbf b) = 1_\ast$ but \eqref{eq:D-kappa} extends to the convolutional version
\be \label{eq:psi-conditional-convolution}
	\psi(\mathbf m + \mathbf n, \bullet \mid \mathbf b) = \psi(\mathbf m, \bullet \mid \mathbf n + \mathbf b) \ast \psi(\mathbf n, \bullet \mid \mathbf b)
\ee
for any given $\mathbf m, \mathbf n, \mathbf b \in \mathbf N$. 
To see this, use \eqref{eq:D-kappa} twice to write
\[
	\psi(\mathbf m + \mathbf n, \bullet \mid \mathbf b) = \kappa(\mathbf m \mid \mathbf n + \mathbf b) \kappa(\mathbf n \mid \mathbf b) \cdot \kappa(\bullet \mid \mathbf m + \mathbf n + \mathbf b) \ast \kappa(\bullet \mid \mathbf b)^{\ast -1},
\]
convolve with $1_\ast = \kappa(\bullet \mid \mathbf n + \mathbf b)^{\ast -1} \ast \kappa(\bullet \mid \mathbf n + \mathbf b)$ and again use \eqref{eq:D-kappa} twice as
\[
	\kappa(\mathbf m \mid \mathbf n + \mathbf b) \kappa(\bullet \mid \mathbf m + \mathbf n + \mathbf b) = D^\mathbf m \kappa(\bullet \mid \mathbf n + \mathbf b)
\]
and $\kappa(\mathbf n \mid \mathbf b) \kappa(\bullet \mid \mathbf n + \mathbf b) = D^\mathbf n \kappa(\bullet \mid \mathbf b)$.

Next, we relate $\psi$ and $\varphi$. 
Using \eqref{eq:kappa-exp-phi-phi-log-kappa} in \eqref{eq:D-exp}, we obtain
\begin{align*}
	\psi(\mathbf x, \bullet \mid \mathbf y) & = \kappa(\bullet \mid \mathbf y)^{\ast -1} \ast \kappa(\bullet \mid \mathbf y) \ast \sum_{\mathfrak p \in \mathfrak P_J} \Conv_{P \in \mathfrak p} D^{\mathbf x_P} \varphi(\bullet \mid \mathbf y) \\
	& = \sum_{\mathfrak p \in \mathfrak P_J} \Conv_{P \in \mathfrak p} D^{\mathbf x_P} \varphi(\bullet \mid \mathbf y)
\end{align*}
for all finitely indexed tuples $\mathbf x \in \X^J$, $\mathbf y \in \X^K$ and, if additionally $J \neq \varnothing$, plugging \eqref{eq:kappa-exp-phi-phi-log-kappa} into \eqref{eq:D-log} yields the M{\"o}bius inverse formula
\begin{align}
	D^\mathbf x \varphi(\bullet \mid \mathbf y) & = \sum_{\mathfrak p \in \mathfrak P_J} (-1)^{|\mathfrak p| - 1} (|\mathfrak p| - 1)! \Conv_{P \in \mathfrak p} (\kappa(\bullet \mid \mathbf y)^{\ast -1} \ast D^{\mathbf x_P} \kappa(\bullet \mid \mathbf y)) \notag \\
	& = \sum_{\mathfrak p \in \mathfrak P_J} (-1)^{|\mathfrak p| - 1} (|\mathfrak p| - 1)! \Conv_{P \in \mathfrak p} \psi(\mathbf x_P, \bullet \mid \mathbf y). \label{eq:D-phi-sum-conv-psi}
\end{align}
In particular, for all $x \in \X$ and $\mathbf b \in \mathbf N$, we have
\be \label{eq:psi-x-D-x-phi}
	\psi(\delta_x, \bullet \mid \mathbf b) = D^x \varphi(\bullet \mid \mathbf b)
\ee
and, by simplicity of $V$, \eqref{eq:D-phi-sum-conv-psi} yields
\be \label{eq:D-phi-hard-core-clique}
	D^{m\delta_x} \varphi(\bullet \mid \mathbf b) = (-1)^{m-1} (m-1)! \psi(\delta_x, \bullet \mid \mathbf b)^{\ast m}.
\ee

\subsection{Controlling partition functions and Mayer series via effective activities}

Given $\mathbf m, \mathbf b \in \mathbf N$, we now formally define
\begin{align*}
	Z(\mathbf m, z \mid \mathbf b) & : = \pi^z[D^\mathbf m \kappa(\bullet \mid \mathbf b)] = \Pi^z[\kappa(\bullet \mid \mathbf b)](\mathbf m), \\
	M(\mathbf m, z \mid \mathbf b) & : = \pi^z[D^\mathbf m \varphi(\bullet \mid \mathbf b)] = \Pi^z[\varphi(\bullet \mid \mathbf b)](\mathbf m), \\
	\rho(\mathbf m, z \mid \mathbf b) & : = \pi^z[\psi(\mathbf m, \bullet \mid \mathbf b)]
\end{align*}
and call these the $\mathbf m$-rooted $\mathbf b$-conditional partition function, Mayer series and correlation, respectively. 
Independently of one another, we may notationally suppress not only the boundary condition $\mathbf b = 0$ but also the root configuration $\mathbf m = 0$. 
In particular, Section~\ref{sec:results} is consistent with
\[
	Z(z) = Z(0, z \mid 0), \quad M(z) = M(0, z \mid 0) \quad \text{and} \quad \rho(x, z) = \rho(\delta_x, z \mid 0)
\]
for all $x \in \X$, where we note \eqref{eq:psi-x-D-x-phi}. 
Likewise, we consistently set
\[
	Z_\Lambda(\mathbf m, z \mid \mathbf b) : = Z(\mathbf m, z \1_\Lambda \mid \mathbf b), \quad M_\Lambda (\mathbf m, z \mid \mathbf b) : = M(\mathbf m, z \1_\Lambda \mid \mathbf b)
\]
and $\rho_\Lambda(\mathbf m, z \mid \mathbf b) : = \rho(\mathbf m, z \1_\Lambda \mid \mathbf b)$ for every $\Lambda \subset \X$.

Building on the relations sketched before Theorem~\ref{thm:dobrushin-sokal}, we extend a given $\Lambda \subset \X$ by some $x \in \X \setminus \Lambda$ and express the changes in partition function and Mayer series in terms of the effective activity $\widehat z_\Lambda(x, z) := z(x) \rho_\Lambda(x, z)$. 
Boundary conditions are straightforward to include, hence omitted here. 
We use \eqref{eq:Pi-z-semigroup} to formally equate $Z_{\Lambda \cup \{x\}}(z) = \pi^{z \1_{\{x\}} + z \1_\Lambda}[\kappa]$ with
\[
	\pi^{z \1_{\{x\}}}[\Pi^{z \1_\Lambda}[\kappa]] = \sum_{m \in \N_0} \frac{z(x)^m}{m!} Z_\Lambda(m \delta_x, z) = Z_\Lambda(z) + z(x) Z_\Lambda(x, z),
\]
where the last equality is due to $V$'s simplicity. 
If, for the last term, we then insert $Z_\Lambda(x, z) = \pi^{z \1_\Lambda}[\kappa \ast \psi(x, \bullet)] = Z_\Lambda(z) \rho_\Lambda(x, z)$, see \eqref{eq:pi-z-g-ast-h}, to arrive at
\be \label{eq:Z-induction-zhat}
	Z_{\Lambda \cup \{x\}}(z) = Z_\Lambda(z) (1 + \widehat z_\Lambda(x, z)).
\ee
To derive the logarithmic variant of the latter, $M_{\Lambda \cup \{x\}}(z) = \pi^z[\varphi \1_{\mathbf N(\Lambda \cup \{x\})}]$ is formally decomposed into the sum of $\pi^z[\varphi \1_{\mathbf N(\Lambda)}] = M_\Lambda(z)$ and
\[
	\pi^z[\varphi \1_{\mathbf N(\Lambda \cup \{x\}) \setminus \mathbf N(\Lambda)}] = \sum_{m \in \N} \frac{z(x)^m}{m!} M_\Lambda(m \delta_x, z),
\]
more or less using \eqref{eq:Pi-z-semigroup} again. 
Simplicity of $V$, through \eqref{eq:D-phi-hard-core-clique} and \eqref{eq:pi-z-conv-g}, gives
\[
	M_\Lambda(m \delta_x, z) = \pi^{z \1_\Lambda}[D^{m \delta_x} \varphi] = (-1)^{m-1} (m-1)! \rho_\Lambda(x, z)^m
\]
for all $m \in \N$. 
In total, we obtain the formal identity
\be \label{eq:M-induction-zhat}
	\pi^z[\varphi \1_{\mathbf N(\Lambda \cup \{x\}) \setminus \mathbf N(\Lambda)}] = \sum_{m \in \N} \frac{(-1)^{m-1} \widehat z_\Lambda(x, z)^m}{m},
\ee
the latter sum representing the logarithm of the factor $1 + \widehat z_\Lambda(x, z)$ in \eqref{eq:Z-induction-zhat}.

\subsection{Controlling correlations and the Kirkwood--Salsburg hierarchy}

Given $\mathbf m, \mathbf n, \mathbf b \in \mathbf N$, applying \eqref{eq:pi-z-g-ast-h} to \eqref{eq:psi-conditional-convolution} formally yields
\be \label{eq:D-rho}
	\rho(\mathbf m + \mathbf n, z \mid \mathbf b) = \rho(\mathbf m, z \mid \mathbf n + \mathbf b) \rho(\mathbf n, z \mid \mathbf b),
\ee
in addition to the unconditional identity $\rho(0, z \mid \mathbf b) = 1$. 
By comparison with \eqref{eq:D-kappa}, we can view the correlation $\rho(\mathbf m, z \mid \mathbf b)$ as an expansion around the Boltzmann factor $\kappa(\mathbf m \mid \mathbf b) = \psi(\mathbf m, 0 \mid \mathbf b) = \rho(\mathbf m, 0 \mid \mathbf b)$ as such.

We combine the latter insight with the \emph{Kirkwood--Salsburg hierarchy}, deduced as follows. 
The binomial formula
\[
	1_\ast = \1_{\{0\}} : \mathbf n  \mapsto (1 - 1)^\mathbf n = \sum_{\mathbf k \in \mathbf N} \binom{\mathbf n}{\mathbf k} (-1)^{\mathbf n - \mathbf k} = (1 \ast (-1)^\bullet)(\mathbf n),
\]
yields $1^{\ast -1} = (-1)^\bullet = (-1)^{|\bullet|}$ as the $\ast$-reciprocal of the constant function $1 = \1_\mathbf N$. 
Another trivariate function $\gamma : \mathbf N \times \mathbf N \times \mathbf N \to \C$ is then given by
\[
	\gamma(\mathbf m, \bullet \mid \mathbf b) : = \kappa(\mathbf m \mid \bullet + \mathbf b) \ast 1^{\ast -1} = \kappa(\mathbf m \mid \bullet + \mathbf b) \ast (-1)^{|\bullet|}
\]
for all $\mathbf m, \mathbf b \in \mathbf N$. 
Given $\mathbf m, \mathbf n, \mathbf b \in \mathbf N$, we employ the latter definition together with \eqref{eq:D-kappa} to equate $D^{\mathbf m + \mathbf n} \kappa(\bullet \mid \mathbf b)$ with
\[
	\kappa(\mathbf m \mid \mathbf n + \bullet + \mathbf b) D^\mathbf n \kappa(\bullet \mid \mathbf b) = (\gamma(\mathbf m, \bullet \mid \mathbf n + \mathbf b) \ast 1) D^\mathbf n \kappa(\bullet \mid \mathbf b),
\]
which, upon convolution with $\kappa(\bullet \mid \mathbf b)^{\ast -1}$ and rearranging, gives
\be \label{eq:psi-gamma-psi}
	\psi(\mathbf m + \mathbf n, \bullet \mid \mathbf b) = \sum_{\mathbf k, \mathbf k' \in \mathbf N} \binom{\bullet}{(\mathbf k, \mathbf k')} \gamma(\mathbf m, \mathbf k \mid \mathbf n + \mathbf b) \psi(\mathbf n + \mathbf k, \mathbf k' \mid \mathbf b).
\ee
If we apply \eqref{eq:pi-z-conv-g-generalised} to the latter, the resulting formal identity reads
\be \label{eq:KS}
	\rho(\mathbf m + \mathbf n, z \mid \mathbf b) = \pi^z[\gamma(\mathbf m, \bullet \mid \mathbf n + \mathbf b) \rho(\mathbf n + \bullet, z \mid \mathbf b)].
\ee
Assuming $|\mathbf m| = 1$, this is the \emph{Kirkwood--Salsburg equation}. 
Its classical use in a fixed point characterisation of Gibbs correlations can be found in, e.g., \cite{ref:ruelle, ref:gallavotti-miracle-sole, ref:jansen-cluster}. 
The usual Picard iteration can also be framed as a device to inductively control the dominating absolute series as in, e.g., \cite{ref:faris, ref:bissacot-fernandez-procacci, ref:jansen-cluster, ref:jansen-kolesnikov}.

Our final, rather specific, auxiliary relation regarding correlations complements \eqref{eq:Z-induction-zhat} and \eqref{eq:M-induction-zhat}. 
Accordingly, let $\Lambda \subset \X$ and $x \in \X \setminus \Lambda$. 
Then \eqref{eq:psi-x-D-x-phi} and \eqref{eq:Pi-z-semigroup} formally yield
\[
	z(x) \rho_{\Lambda \cup \{x\}}(x, z) = z(x) M_{\Lambda \cup \{x\}}(x, z) = \sum_{m \in \N_0} \frac{z(x)^{m+1}}{m!} M_\Lambda((m+1) \delta_x, z)
\]
and, if we proceed as in the derivation of \eqref{eq:M-induction-zhat}, we now obtain
\be \label{eq:z-rho-induction-zhat}
	z(x) \rho_{\Lambda \cup \{x\}}(x, z) = \widehat z_\Lambda(x, z) \sum_{m \in \N_0} (- \widehat z_\Lambda(x, z))^m.
\ee

\subsection{Conditional hypergraphs}

For the diagrammatic viewpoint, we use the conditional Mayer function $f : = \e^{-V} - 1 : \mathbf N \times \mathbf N \to \C$ to write
\[
	\kappa(\mathbf x \mid \mathbf y) = \sum_{\mathfrak h \in \mathfrak H_J} \prod_{e \in \mathfrak h} f(\mathbf x_e \mid \mathbf y) = \sum_{\mathfrak h \in \mathfrak H_{J \mid K}} \prod_{e \in \mathfrak h} f(\mathbf x_{e \cap J} + \mathbf y_{e \cap K})
\]
for arbitrary finitely indexed tuples $\mathbf x \in \X^J$, $\mathbf y \in \X^K$. 
Given implicitly disjoint sets $\U$ and $\V$, the conditional hypergraph notation means
\[
	\mathfrak H_{\U \mid \V} : = \{\mathfrak h \subset \{e \Subset \U \sqcup \V \mid e \cap \U \neq \varnothing\}\},
\]
i.e., $\V$ consists of purely auxiliary vertices unable to support edges alone. 
We do not use the diagrammatic representations of $\varphi$ or $\psi$, cf.\ \cite{ref:jansen-kolesnikov}, but that of $\gamma$ is essential to our argument. 
Let $x \in \X$ in addition to the finitely indexed tuples $\mathbf x \in \X^J$, $\mathbf y \in \X^K$. 
Then $\kappa(x \mid \mathbf x + \mathbf y)$, on the one hand, equals
\[
	\sum_{\mathfrak h \in \mathfrak H_{\{0\} \mid J}} \prod_{e \in \mathfrak h} f(x + \mathbf x_{e \setminus \{0\}} \mid \mathbf y) = \kappa(x \mid \mathbf y) \sum_{\mathfrak h \in \mathfrak H_J} \prod_{e \in \mathfrak h} f(x + \mathbf x_e \mid \mathbf y)
\]
and, by definition of $\gamma$, it also coincides with $\sum_{S \subset J} \gamma(x, \mathbf x_S \mid \mathbf y)$. 
A diagrammatic solution is provided by tracking the set $\mathrm{span}(\mathfrak h) : = \bigcup_{e \in \mathfrak h} e$ for an arbitrary hypergraph $\mathfrak h$ since, for a given vertex set $\V$, we clearly have
\[
	\mathfrak H_\V = \bigsqcup_{S \subset \V} \mathfrak S_S \quad \text{with} \quad \mathfrak S_\V : = \{\mathfrak h \in \mathfrak H_\V \mid \mathrm{span}(\mathfrak h) = \V\}.
\]
Comparison with the above representations of $\kappa(x \mid \mathbf x + \mathbf y)$ leads to
\[
	\gamma(x, \mathbf x \mid \mathbf y) = \kappa(x \mid \mathbf y) \sum_{\mathfrak h \in \mathfrak S_J} \prod_{e \in \mathfrak h} f(x + \mathbf x_e \mid \mathbf y).
\]
The latter expression can also be expanded further into
\be \label{eq:gamma-kappa-span-sum}
	\gamma(x, \mathbf x \mid \mathbf y) = \kappa(x \mid \mathbf y) \sum_{\mathfrak h \in \mathfrak S_{J \mid K}} \prod_{e \in \mathfrak h} f(x + \mathbf x_{e \cap J} + \mathbf y_{e \cap K}),
\ee
with $\mathfrak S_{\U \mid \V} : = \{\mathfrak h \in \mathfrak H_{\U \mid \V} \mid \mathrm{span}(\mathfrak h) \supset \U\}$ for arbitrary sets $\U$ and $\V$. 
If $V = V \1_{\{|\bullet| \leq 2\}}$ or equivalently $f = \e^{-V} - 1 = f \1_{\{|\bullet| \leq 2\}}$, the latter reads
\[
	\gamma(x, \mathbf x \mid \mathbf y) = \kappa(x \mid \mathbf y) \prod_{j \in J} f(x + x_j)
\]
and is straightforward to bound, cf.\ \cite{ref:ruelle, ref:ueltschi, ref:poghosyan-ueltschi, ref:fernandez-procacci, ref:bissacot-fernandez-procacci, ref:faris, ref:jansen-cluster, ref:jansen-kolesnikov} to name a few. 
Bounding the $2^{|J|}$ summands in $\gamma(x, \mathbf x \mid \mathbf y) = \sum_{S \subset J} (-1)^{|J \setminus S|} \kappa(x \mid \mathbf x_S + \mathbf y)$ individually is also an option, see \cite{ref:moraal}, as are hybrid approaches, treating (parts of) the two-particle component of $V$ separately from the rest as in \cite{ref:skrypnyk}, which is related to discretisations of continuum systems in cases where, e.g., $V \1_{\{|\bullet| = 2\}}$ is partially hard-core, cf.\ \cite{ref:procacci-scoppola-continuum}. 
Our chosen method of bounding $\gamma(x, \mathbf x \mid \mathbf y)$ combines \eqref{eq:gamma-kappa-span-sum} with the subsequent partition scheme for spanning hypergraphs, refining the approach in, e.g., \cite{ref:gallavotti-miracle-sole, ref:ruelle}. 

\begin{lemma} \label{lem:spanning-hypergraph-partition-scheme}
	Let $\U, \V$ be finite sets. 
	There exist maps $\mathfrak m : \mathfrak S_{\U \mid \V} \to \mathfrak S_{\U \mid \V}$ and $\mathfrak M : \mathfrak m(\mathfrak S_{\U \mid \V}) \to \mathfrak S_{\U \mid \V}$ such that, for all $\mathfrak h \in \mathfrak m(\mathfrak S_{\U \mid \V})$, we have
	\[
		\{\mathfrak h' \in \mathfrak S_{\U \mid \V} \mid \mathfrak m(\mathfrak h') = \mathfrak h\} = \{\mathfrak h' \in \mathfrak S_{\U \mid \V} \mid \mathfrak h \subset \mathfrak h' \subset \mathfrak M(\mathfrak h)\}.
	\]
	In particular, such maps yield
	\[
		\mathfrak S_{\U \mid \V} = \bigsqcup_{\mathfrak h \in \mathfrak m(\mathfrak S_{\U \mid \V})} \{\mathfrak h' \in \mathfrak S_{\U \mid \V} \mid \mathfrak h \subset \mathfrak h' \subset \mathfrak M(\mathfrak h)\}.
	\]
\end{lemma}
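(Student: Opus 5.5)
The plan is to make the partition explicit by choosing $\mathfrak m(\mathfrak h')$ to be a canonical sub-cover of $\mathfrak h'$, namely the edges that are first, in a fixed order, to cover some vertex of $\U$. Then $\mathfrak M(\mathfrak h)$ is obtained by adding every edge that can never be first for any vertex.

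\textbf{Setup.} Since $\U$ and $\V$ are finite, the set $E := \{e \Subset \U \sqcup \V \mid e \cap \U \neq \varnothing\}$ of admissible edges is finite. Fix a total order $<$ on $E$. For $\mathfrak h \in \mathfrak H_{\U \mid \V}$ and $u \in \U \cap \mathrm{span}(\mathfrak h)$, write $e_{\mathfrak h}(u)$ for the $<$-minimal edge of $\mathfrak h$ containing $u$. For $\mathfrak h' \in \mathfrak S_{\U \mid \V}$ this is defined for every $u \in \U$, because $\mathrm{span}(\mathfrak h') \supset \U$.

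\textbf{Definition of the maps.} Set
\[
	\mathfrak m(\mathfrak h') := \{e_{\mathfrak h'}(u) \mid u \in \U\} \subset \mathfrak h'.
\]
Every $u \in \U$ lies in $e_{\mathfrak h'}(u) \in \mathfrak m(\mathfrak h')$, so $\mathfrak m(\mathfrak h') \in \mathfrak S_{\U \mid \V}$. Moreover, for each $u$, the minimal edge of $\mathfrak m(\mathfrak h')$ containing $u$ is still $e_{\mathfrak h'}(u)$, since $\mathfrak m(\mathfrak h') \subset \mathfrak h'$ and it contains that edge. Hence
\[
	e_{\mathfrak m(\mathfrak h')} = e_{\mathfrak h'} \quad \text{and} \quad \mathfrak m \circ \mathfrak m = \mathfrak m.
\]
For $\mathfrak h$ in the image of $\mathfrak m$, define
\[
	\mathfrak M(\mathfrak h) := \mathfrak h \cup \{e \in E \setminus \mathfrak h \mid e > e_{\mathfrak h}(u) \text{ for all } u \in e \cap \U\}.
\]
This is well defined because $\mathfrak h$ covers $\U$, and it lies in $\mathfrak S_{\U \mid \V}$ because it contains $\mathfrak h$.

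\textbf{Fibre identity.} Fix $\mathfrak h = \mathfrak m(\mathfrak h_0)$; by idempotence, $\mathfrak h = \mathfrak m(\mathfrak h)$.

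For "$\subset$", let $\mathfrak m(\mathfrak h') = \mathfrak h$. Then $\mathfrak h \subset \mathfrak h'$, and $e_{\mathfrak h'} = e_{\mathfrak h}$ by the observation above. Take any $e \in \mathfrak h' \setminus \mathfrak h$ and any $u \in e \cap \U$. The edge $e$ competes for the minimum $e_{\mathfrak h'}(u)$, so $e \geq e_{\mathfrak h}(u)$. Equality is impossible because $e \notin \mathfrak h$, so $e > e_{\mathfrak h}(u)$, and therefore $\mathfrak h' \subset \mathfrak M(\mathfrak h)$.

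For "$\supset$", let $\mathfrak h \subset \mathfrak h' \subset \mathfrak M(\mathfrak h)$. Then $\mathfrak h' \in \mathfrak S_{\U \mid \V}$. For each $u \in \U$, every edge of $\mathfrak h' \setminus \mathfrak h$ containing $u$ exceeds $e_{\mathfrak h}(u)$, so $e_{\mathfrak h'}(u) = e_{\mathfrak h}(u)$. Hence
\[
	\mathfrak m(\mathfrak h') = \{e_{\mathfrak h}(u) \mid u \in \U\} = \mathfrak m(\mathfrak h) = \mathfrak h.
\]

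\textbf{Disjoint union.} The decomposition of $\mathfrak S_{\U \mid \V}$ is then just the partition of $\mathfrak S_{\U \mid \V}$ into the fibres of $\mathfrak m$.

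\textbf{Main subtlety.} The delicate point is to define $\mathfrak M$ relative to the minimal edges $e_{\mathfrak h}(u)$ of $\mathfrak h$ itself, and to use the idempotence $\mathfrak m(\mathfrak h) = \mathfrak h$ on the image. Without idempotence, adding edges from $\mathfrak M(\mathfrak h)$ could change which edges are minimal, and the fibre would no longer be an interval. Once the canonical-minimal-edge description is in place, the rest is routine.
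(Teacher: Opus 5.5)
Your proof is correct and follows the paper's construction. The paper defines $\mathfrak m(\mathfrak h)$ as the set of edges $e \in \mathfrak h$ with $e \cap \U \not\subset \mathrm{span}(\{e' \in \mathfrak h \mid e' \prec e\})$, which are exactly your per-vertex minimal edges $e_{\mathfrak h}(u)$, and its $\mathfrak M(\mathfrak h)$ adds exactly the edges with $\varnothing \neq e \cap \U \subset \mathrm{span}(\{e' \in \mathfrak h \mid e' \prec e\})$, i.e.\ your edges exceeding $e_{\mathfrak h}(u)$ for all $u \in e \cap \U$; your explicit idempotence argument and two-sided fibre check spell out what the paper calls ``easy to see''.
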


\begin{proof}
	Fix an arbitrary total order $\preceq$ on $\{e \Subset \U \sqcup \V \mid e \cap \U \neq \varnothing\}$, denoting $\prec$ the version that excludes equality. 
	For each $\mathfrak h \in \mathfrak S_{\U \mid \V}$, we define
	\[
		\mathfrak m(\mathfrak h) : = \{e \in \mathfrak h \mid e \cap \U \not\subset \mathrm{span}(\{e' \in \mathfrak h \mid e' \prec e\})\}.
	\]
	and it is easy to see that $\mathfrak m(\mathfrak h)$ can be assembled iteratively by considering the edges of $\mathfrak h$ in ascending order, that
	\[
		\mathrm{span}(\{e' \in \mathfrak h \mid e' \prec e\}) \cap \U = \mathrm{span}(\{e' \in \mathfrak m(\mathfrak h) \mid e' \prec e\}) \cap \U
	\]
	for all $e \in \mathfrak h$ and that indeed $\mathrm{span}(\mathfrak m(\mathfrak h)) \cap \U = \mathrm{span}(\mathfrak h) \cap \U = \U$. 
	We thereby obtain a map $\mathfrak m : \mathfrak S_{\U \mid \V} \to \mathfrak S_{\U \mid \V}$ with the image
	\[
		\mathfrak m(\mathfrak S_{\U \mid \V}) = \{\mathfrak h \in \mathfrak S_{\U \mid \V} \mid \forall e \in \mathfrak h : e \cap \U \not\subset \mathrm{span}(\{e' \in \mathfrak h \mid e' \prec e\})\}
	\]
	and, for all $\mathfrak h \in \mathfrak m(\mathfrak S_{\U \mid \V})$ and $\mathfrak h' \in \mathfrak S_{\U \mid \V}$, we have $\mathfrak m(\mathfrak h') = \mathfrak h$ if and only if
	\[
		\mathfrak h \subset \mathfrak h' \subset \mathfrak h \sqcup \{e \Subset \U \sqcup \V \mid \varnothing \neq e \cap \U \subset \mathrm{span}(\{e' \in \mathfrak h \mid e' \prec e\})\} = : \mathfrak M(\mathfrak h).
	\]
	This completes the proof.
\end{proof}

Lemma~\ref{lem:spanning-hypergraph-partition-scheme} is used in the second part of our proof of Theorem~\ref{thm:dobrushin-sokal} below. 
There, we briefly comment on the utility of this conditional version of the analogous argument in Part~I.

\section{Proof of our main results} \label{sec:proof-main}

For this section, we fix $z : \X \to \C$ and use the shorthand
\[
	\xi_\Lambda(\mathbf m \mid \mathbf b) : = |z|^\mathbf m \pi^{|z| \1_\Lambda}[|\psi(\mathbf m, \bullet \mid \mathbf b)|] \in \overline{\R}_+
\]
for all $\Lambda \subset \X$ and $\mathbf m, \mathbf b \in \mathbf N$. 
Theorem~\ref{thm:dobrushin-sokal} is proved by establishing that its hypotheses imply that, for all $\Lambda \subset \X$, $\mathbf m \in \mathbf N$ and $B \Subset \X$, we have
\be \label{eq:xi-leq-alpha-r-power-simple}
	\xi_\Lambda(\mathbf m \mid B) \leq (\alpha \1_\Lambda + r \1_{\X \setminus \Lambda})^\mathbf m \1_{\mathbf F_{\mid B}}(\mathbf m)
\ee
with $\alpha = \frac{r}{1-r} : \X \to \R_+$ and $r = \frac{\alpha}{1+\alpha} : \X \to [0, 1)$. 
Let us stress again that the simplicity of $V$ is essential to our argument and that it immediately validates the latter bound for $\mathbf m \notin \mathbf F_{\mid B} = \{X \Subset \X \setminus B\}$. 
Otherwise, we proceed inductively based on \eqref{eq:D-rho}, \eqref{eq:KS} and \eqref{eq:z-rho-induction-zhat}. 
Lemma~\ref{lem:spanning-hypergraph-partition-scheme} is used to connect the induction step to the hypothesis of Theorem~\ref{thm:dobrushin-sokal}.

Part~I features two proofs of analogous correlation bounds but not for the cluster expansions: 
the primary proof there adapts the argument of \cite{ref:bencs-buys}, leading to an induction along $\Lambda$; 
the alternative one ironically strips the Kirkwood--Salsburg-based ansatz in \cite{ref:jansen-kolesnikov} of its cluster expansion context, using a fixed $\Lambda$. 
The subsequent argument combines the two approaches for a streamlined treatment of Theorem~\ref{thm:dobrushin-sokal} and Corollary~\ref{cor:Z-M-r-bounds}.

\subsection{Proof of Theorem~\ref{thm:dobrushin-sokal} - setup of the induction}

Fix $\Lambda \subset \X$. 
For all $\mathbf m, \mathbf n \in \mathbf N$, applying Lemma~\ref{lem:pi-z-conv-g-generalised} to \eqref{eq:psi-conditional-convolution} yields
\[
	\xi_\Lambda(\mathbf m + \mathbf n \mid \mathbf b) \leq \xi_\Lambda(\mathbf m \mid \mathbf n + \mathbf b) \xi_\Lambda(\mathbf n \mid \mathbf b)
\]
with $\xi_\Lambda(\varnothing \mid \mathbf b) = \pi^{|z| \1_\Lambda}[1_\ast] = 1$. 
In particular, we use the variant
\be \label{eq:xi-leq-prod-xi}
	\xi_\Lambda(X \mid B) \leq \prod_{x \in X} \xi_\Lambda(x \mid \{y \in X \mid y \prec x\} \cup B)
\ee
for all $X, B \Subset \X$ with $X \cap B = \varnothing$, where $\prec$ denotes an arbitrary total order on $X$ with equality excluded.

In view of the latter, we now also fix $B \Subset \X$ and $x \in \X \setminus B$. 
Applying Lemmas~\ref{lem:superposition-principle} and \ref{lem:pi-z-conv-g-generalised} along the derivation of \eqref{eq:z-rho-induction-zhat} nets us
\be \label{eq:xi-x-in-x-out}
	\xi_{\Lambda \cup \{x\}}(x \mid B) \leq \xi_{\Lambda \setminus \{x\}}(x \mid B) \sum_{m \in \N_0} \xi_{\Lambda \setminus \{x\}}(x \mid B)^m.
\ee
This upper bound is finite if and only if $\xi_{\Lambda \setminus \{x\}}(x \mid B) < 1$, in which case it is equal to $\frac{\xi_{\Lambda \setminus \{x\}}(x \mid B)}{1-\xi_{\Lambda \setminus \{x\}}(x \mid B)}$. 
In particular, $\xi_{\Lambda \cup \{x\}}(x \mid B) \leq \alpha(x) = \frac{r(x)}{1-r(x)}$ follows whenever $\xi_{\Lambda \setminus \{x\}}(x \mid B) \leq r(x) < 1$.

For the induction step, we use the Kirkwood--Salsburg hierarchy in the form
\be \label{eq:xi-leq-sum-gamma-xi}
	\xi_\Lambda(x \mid B) \leq |z(x)| \sum_{Y \Subset \Lambda \setminus B} |\gamma(x, Y \mid B)| \xi_\Lambda(Y \mid B).
\ee
This is obtained by applying Lemma~\ref{lem:pi-z-conv-g-generalised} to \eqref{eq:psi-gamma-psi}, thereby bounding the left-hand side of \eqref{eq:xi-leq-sum-gamma-xi} by $|z(x)| \pi^{\1_\Lambda}[|\gamma(x, \bullet \mid B)| \xi_\Lambda(\bullet \mid B)]$, which becomes the right-hand side upon accounting for $V$'s simplicity.

The combination of the above inequalities gives the following.

\begin{lemma} \label{lem:xi-KS-induction}
	Suppose that, for all $x \in \X$ and $B \Subset \X$ with $x \notin B$,
	\[
		|z(x)| \sum_{Y \Subset \X \setminus (\{x\} \cup B)} |\gamma(x, Y \mid B)| \alpha^Y \leq r(x).
	\]
	Then \eqref{eq:xi-leq-alpha-r-power-simple} holds for all $\Lambda \subset \X$, $\mathbf m \in \mathbf N$ and $B \Subset \X$.
\end{lemma}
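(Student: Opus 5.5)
The plan is to prove \eqref{eq:xi-leq-alpha-r-power-simple} by induction on $|\mathbf m|$, with $\Lambda$ arbitrary and simultaneously over all $B$.

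\textbf{Trivial cases.} If $\mathbf m \notin \mathbf F_{\mid B}$, simplicity of $V$ forces $\psi(\mathbf m, \bullet \mid B) = 0$, so $\xi_\Lambda(\mathbf m \mid B) = 0$. If $\mathbf m = 0$, then $\xi_\Lambda(0 \mid B) = 1$.

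\textbf{Reduction to one root.} For $\mathbf m = X \in \mathbf F_{\mid B}$ with $|X| \geq 2$, I apply \eqref{eq:xi-leq-prod-xi} with an arbitrary total order on $X$. Each factor is a one-point quantity $\xi_\Lambda(x \mid B')$ with $B' = \{y \in X \mid y \prec x\} \cup B$ finite and $x \notin B'$. It therefore suffices to prove, for every finite $B$ and $x \notin B$,
\[
	\xi_\Lambda(x \mid B) \leq \alpha(x) \1_\Lambda(x) + r(x) \1_{\X \setminus \Lambda}(x).
\]
So the induction is really only needed on single roots. Since the Kirkwood--Salsburg step produces roots $Y$ of arbitrary size, I will not induct on $|\mathbf m|$ directly. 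Instead I approximate by the polynomial truncations $\xi^{(N)}$, i.e. the contributions with $|\mathbf n| \leq N$ in $\pi^{|z|\1_\Lambda}$, and induct on $N$. All the inequalities used (\eqref{eq:xi-leq-prod-xi}, \eqref{eq:xi-x-in-x-out}, \eqref{eq:xi-leq-sum-gamma-xi}) respect this grading, since the total number of particles only splits into pieces. Monotone convergence in $N$ then gives the claim.

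\textbf{Case $x \notin \Lambda$.} Apply \eqref{eq:xi-leq-sum-gamma-xi}. The right-hand side involves $\xi_\Lambda(Y \mid B)$ with $Y \Subset \Lambda \setminus B$, $|Y| \geq 1$, at strictly lower truncation level, except for $Y = \varnothing$, which contributes $1$. By the induction hypothesis (via the reduction above, and noting $Y \subset \Lambda$),
\[
	\xi_\Lambda(Y \mid B) \leq \alpha^Y.
\]
Hence
\[
	\xi_\Lambda(x \mid B) \leq |z(x)| \sum_{Y \Subset \X \setminus (\{x\} \cup B)} |\gamma(x, Y \mid B)| \alpha^Y \leq r(x)
\]
by the hypothesis of the lemma. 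The truncation bookkeeping needs care: a $Y$ with $|Y| = k$ consumes $k$ particles, so the inner $\xi$ is evaluated at level $N - k$.

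\textbf{Case $x \in \Lambda$.} Use \eqref{eq:xi-x-in-x-out} with $\Lambda \setminus \{x\}$ in place of $\Lambda$. The previous case, applied to the volume $\Lambda \setminus \{x\}$ (which excludes $x$), gives $\xi_{\Lambda \setminus \{x\}}(x \mid B) \leq r(x) < 1$. The geometric series then yields
\[
	\xi_\Lambda(x \mid B) \leq \frac{r(x)}{1 - r(x)} = \alpha(x).
\]
For the truncated version, the $m$-th geometric term uses $m$ extra particles at $x$, so lower levels suffice. Because the previous case holds for every volume at the same level, no circularity arises. Alternatively, one can order the argument as: first establish the case $x \notin \Lambda$ at level $N$ for all volumes, then the case $x \in \Lambda$ at level $N$.

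\textbf{Main obstacle.} The delicate part is making the truncated induction consistent. The bound \eqref{eq:xi-leq-sum-gamma-xi} must be checked to hold at each truncation level with $\xi_\Lambda(Y \mid B)$ at a lower level, since Lemma~\ref{lem:pi-z-conv-g-generalised} is applied to a sum split over particle counts. Likewise \eqref{eq:xi-x-in-x-out} must split over the number $m$ of particles at $x$. The first thing I would verify is that both \eqref{eq:psi-gamma-psi} and the superposition identity are graded by total particle number, which makes this work.
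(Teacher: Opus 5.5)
Your argument is correct, but it organises the induction differently from the paper. The paper first reduces, by monotone convergence, to finite $\Lambda$ and then inducts on $|\Lambda|$. Within a step, the case $x\in\Lambda$ comes first: it follows from \eqref{eq:xi-x-in-x-out} and the induction hypothesis for the smaller volume $\Lambda\setminus\{x\}$. The case $x\notin\Lambda$ then follows from \eqref{eq:xi-leq-sum-gamma-xi} and \eqref{eq:xi-leq-prod-xi}, using the case $x\in\Lambda$ at the same $\Lambda$. The base case $\Lambda=\varnothing$ is just the $Y=\varnothing$ term of the hypothesis.

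You instead keep $\Lambda$ arbitrary and induct on the particle-number truncation $N$, which reverses the order within a step. The case $x\notin\Lambda$ comes first, since each root in $Y$ uses up a particle and only levels $N-|Y|<N$ enter. The case $x\in\Lambda$ comes second, because the $j=0$ term of the geometric bound sits at the same level $N$; you correctly noticed this and ordered the cases accordingly.

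Your route needs one extra check: \eqref{eq:xi-leq-prod-xi}, \eqref{eq:xi-leq-sum-gamma-xi} and \eqref{eq:xi-x-in-x-out} must hold for the truncated quantities with the stated level shifts. This does go through, because the underlying identities \eqref{eq:psi-conditional-convolution}, \eqref{eq:psi-gamma-psi} and \eqref{eq:D-phi-hard-core-clique} preserve total particle number. It is essentially the Picard-iteration viewpoint on the Kirkwood--Salsburg hierarchy.

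The paper's volume induction avoids all truncation bookkeeping, since the needed inequalities hold unconditionally in $\overline{\R}_+$ for the full series. The price is a monotone-convergence reduction in $\Lambda$ rather than in $N$, so neither route is more general. Your opening sentence about inducting on $|\mathbf m|$ is superseded by the truncation scheme and can simply be dropped.
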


\begin{proof}
	By monotone convergence, it suffices to show that the conclusion holds for all $\Lambda \Subset \X$ and we do so by induction.
	
	For $\Lambda = \varnothing$ and $B \Subset \X$, \eqref{eq:xi-leq-prod-xi} implies \eqref{eq:xi-leq-alpha-r-power-simple} with arbitrary $\mathbf m \in \mathbf N$ as soon as we validate the case $\mathbf m = \delta_x$ with arbitrary $x \in \X \setminus B$. 
	Indeed,
	\[
		\xi_\varnothing(x \mid B) = |z(x)| |\psi(x, \varnothing \mid B)| = |z(x)| |\gamma(x, \varnothing \mid B)| \leq r(x)
	\]
	follows from $\psi(x, \varnothing \mid B) = \kappa(x \mid B) = \gamma(x, \varnothing \mid B)$ and the hypothesis.
	
	Now fix some $\Lambda \Subset \X$ and $B \Subset \X$. 
	Proving \eqref{eq:xi-leq-alpha-r-power-simple} again reduces to the case $\mathbf m = \delta_x$ with $x \in \X \setminus B$, i.e., it remains to show
	\[
		\xi_\Lambda(x \mid B) \leq \alpha(x) \1_\Lambda(x) + r(x) \1_{\X \setminus \Lambda}(x) = \frac{r(x)}{1 - r(x) \1_\Lambda(x)}.
	\]
	If $x \in \Lambda$, the latter inductively follows from $\xi_{\Lambda \setminus \{x\}}(x \mid B) \leq r(x) < 1$ via \eqref{eq:xi-x-in-x-out}. 
	If, on the other hand, $x \notin \Lambda$, then we obtain
	\[
		\xi_\Lambda(x \mid B) \leq |z(x)| \sum_{Y \Subset \Lambda \setminus B} |\gamma(x, Y \mid B)| \alpha^Y \leq r(x),
	\]
	where the first inequality follows from a combination of \eqref{eq:xi-leq-sum-gamma-xi}, \eqref{eq:xi-leq-prod-xi} and the previous case while the second inequality is just the lemma's hypothesis.
\end{proof}

\subsection{Proof of Theorem~\ref{thm:dobrushin-sokal} - derivation of the criterion}

We now utilise Lemma~\ref{lem:spanning-hypergraph-partition-scheme} to insert an intermediate bound into the the hypothesis of Lemma~\ref{lem:xi-KS-induction}. 
Given $x \in \X$ and $B, Y \Subset \X \setminus \{x\}$ with $Y \cap B = \varnothing$,  \eqref{eq:gamma-kappa-span-sum} yields
\[
	\gamma(x, Y \mid B) = \kappa(x \mid B) \sum_{\mathfrak h \in \mathfrak S_{Y \mid B}} \prod_{e \in \mathfrak h} f(x + e).
\]
We bound the sum over $\mathfrak S_{Y \mid B}$ by first expanding it via
\[
	\sum_{\mathfrak h \in \mathfrak S_{Y \mid B}} \prod_{e \in \mathfrak h} f(x + e) = \sum_{\mathfrak h \in \mathfrak m(\mathfrak S_{Y \mid B})} \prod_{e \in \mathfrak h} f(x + e) \prod_{e \in \mathfrak M(\mathfrak h) \setminus \mathfrak h} (1 + f(x + e)),
\]
where $\mathfrak m : \mathfrak S_{Y \mid B} \to \mathfrak S_{Y \mid B}$ and $\mathfrak M : \mathfrak m(\mathfrak S_{Y \mid B}) \to \mathfrak S_{Y \mid B}$ are the functions from the proof of Lemma~\ref{lem:spanning-hypergraph-partition-scheme} with $\U = Y$ and $\V = B$. 
Writing $1+f = \e^{-V}$ and multiplying by $\kappa(x \mid B) = \e^{- V(x \mid B)} = \prod_{e \subset B} \e^{- V(x + e)}$ gives
\[
	\gamma(x, Y \mid B) = \prod_{e \subset B} \e^{- V(x + e)} \sum_{\mathfrak h \in \mathfrak m(\mathfrak S_{Y \mid B})} \prod_{e \in \mathfrak h} f(x + e) \prod_{e \in \mathfrak M(\mathfrak h) \setminus \mathfrak h} \e^{-V(x + e)}.
\]
If we then take absolute values and multiply by $\alpha^Y$, we get
\begin{align*}
	& |\gamma(x, Y \mid B)| \alpha^Y \\
	& \quad \leq \prod_{e \subset B} |\e^{- V(x + e)}| \sum_{\mathfrak h \in \mathfrak m(\mathfrak S_{Y \mid B})} \prod_{e \in \mathfrak h} |f(x + e)| \alpha^{Y \cap e \setminus \mathrm{span}(\mathfrak h_{\prec e})} \prod_{e \in \mathfrak M(\mathfrak h) \setminus \mathfrak h} |\e^{-V(x + e)}|
\end{align*}
with $\mathfrak h_{\prec e} = \{e \in \mathfrak h \mid e' \prec e\}$ and 
$\alpha^{Y \cap e \setminus \mathrm{span}(\mathfrak h_{\prec e})} \leq \max\{\alpha^S \mid \varnothing \neq S \subset e \setminus B\}$ for each $\mathfrak h \in \mathfrak m(\mathfrak S_{Y \mid B})$ and $e \in \mathfrak h$. 
For every $e \Subset \X \setminus \{x\}$, we then use the shorthand
\[
	\mathrm{m}(x, e \mid B) : = \max\{|\e^{-V(x+e)}|, 1 + |f(x + e)| \alpha^S \mid \varnothing \neq S \subset e \setminus B\}
\]
and note that $\mathrm{m}(x, e \mid B) \geq 1$ whenever $e \setminus B \neq \varnothing$ so the above implies
\begin{align*}
	& |\gamma(x, Y \mid B)| \alpha^Y \\
	& \quad \leq \prod_{e \subset B} \mathrm{m}(x, e \mid B) \sum_{\mathfrak h \in \mathfrak m(\mathfrak S_{Y \mid B})} \prod_{e \in \mathfrak h} (\mathrm{m}(x, e \mid B)) - 1) \prod_{e \in \mathfrak M(\mathfrak h)} \mathrm{m}(x, e \mid B) \\
	& \quad = \prod_{e \subset B} \mathrm{m}(x, e \mid B) \sum_{\mathfrak h \in \mathfrak S_{Y \mid B}} \prod_{e \in \mathfrak h} (\mathrm{m}(x, e \mid B)) - 1).
\end{align*}
The last identity is a reverse application of Lemma~\ref{lem:spanning-hypergraph-partition-scheme} and summing over the possible choices of $Y$ nets us
\begin{align}
	& \sum_{Y \Subset \X \setminus (\{x\} \cup B)} |\gamma(x, Y \mid B)| \alpha^Y \notag \\
	& \quad \leq \prod_{e \subset B} \mathrm{m}(x, e \mid B) \sum_{Y \Subset \X \setminus (\{x\} \cup B)} \sum_{\mathfrak h \in \mathfrak S_{Y \mid B}} \prod_{e \in \mathfrak h} (\mathrm{m}(x, e \mid B)) - 1) \notag \\
	& \quad = \prod_{e \Subset \X \setminus \{x\}} \mathrm{m}(x, e \mid B). \label{eq:sum-gamma-alpha-leq-Gamma-alpha}
\end{align}
The last equality is based on equating $\bigsqcup_{Y \Subset \X \setminus (\{x\} \cup B)} \mathfrak S_{Y \mid B}$ with
\[
	 \{\mathfrak h \in \mathfrak H_{\X \setminus (\{x\} \cup B) \mid B} \mid |\mathrm{span}(\mathfrak h)| < \infty\} = \{\mathfrak h \in \mathfrak H_{\X \setminus (\{x\} \cup B) \mid B} \mid |\mathfrak h| < \infty\},
\]
noting again that $\mathrm{m}(x, e \mid B) \geq 1$ for all $e \Subset \X \setminus \{x\}$ with $e \setminus B \neq \varnothing$.

At this point, it is evident that all the hypotheses of Lemma~\ref{lem:xi-KS-induction} are simultaneously satisfied whenever $|z(x)| \prod_{e \Subset \X \setminus \{x\}} \mathrm{m}(x, e | \varnothing) \leq r(x)$, i.e.,
\[
	|z(x)| \prod_{e \Subset \X \setminus \{x\}} \max\{|\e^{-V(x+e)}|, 1 + |f(x + e)| \alpha^S \mid \varnothing \neq S \subset e\} \leq r(x),
\]
for all $x \in \X$, which is precisely the hypothesis \eqref{eq:dobrushin-sokal} of Theorem~\ref{thm:dobrushin-sokal}, whose conclusions are, on the other hand, subsumed by those of Lemma~\ref{lem:xi-KS-induction}.
\hfill $\square$
\medskip

The aforementioned alternative proof in Part~I basically establishes \eqref{eq:xi-leq-alpha-r-power-simple} via an unconditional version of Lemma~\ref{lem:spanning-hypergraph-partition-scheme}, i.e., with $\V = \varnothing$. 
The accompanying proof of why the condition \eqref{eq:dobrushin-sokal} is ``stable under conditioning'' is made obsolete here by employing the conditional version from the start.

\subsection{Proof of Corollary~\ref{cor:Z-M-r-bounds}}

The overarching assumption in this proof is \eqref{eq:zhat-leq-r} in the form
\be \label{eq:zhat-xi-r-1}
	|\widehat z_{\Lambda \setminus \{x\}}(x, z)| \leq \xi_{\Lambda \setminus \{x\}}(x) \leq r(x) < 1,
\ee
for all $\Lambda \subset \X$ and $x \in \X$. 
Simple boundary conditions are easily included but omitted for better readability, just like in the derivations of \eqref{eq:Z-induction-zhat} and \eqref{eq:M-induction-zhat}, which the following arguments are based upon.

Suppose that $\Lambda \subset \X$ and $x \in \X \setminus \Lambda$. 
We reiterate the derivation of \eqref{eq:Z-induction-zhat} but with absolute values inserted everywhere.  Lemmas~\ref{lem:superposition-principle} and \ref{lem:pi-z-conv-g-generalised} thereby yield
\[
	\pi^{|z| \1_{\{x\} \cup \Lambda}}[|\kappa|] \leq \pi^{|z| \1_\Lambda}[|\kappa|] (1 + \xi_\Lambda(x))
\]
and, starting from $\pi^{|z| \1_\varnothing} = |\kappa(0)| = 1 = (1 + r)^\varnothing$, we can inductively apply \eqref{eq:zhat-xi-r-1} to obtain $\pi^{|z| \1_\Lambda}[|\kappa|] \leq (1 + r)^\Lambda$ whenever $\Lambda$ is finite and, for arbitrary $\Lambda \subset \X$, the latter bound then also follows by monotone convergence. 
In particular, if $|Z_\Lambda(z)| \leq \pi^{|z| \1_\Lambda}[|\kappa|] < \infty$, the absolute values can be dropped to obtain
\[
	Z_{\{x\} \cup \Lambda}(z) = Z_\Lambda(z) (1 + \widehat z_\Lambda(x, z)) \in \C
\]
as well as $|Z_\Lambda(z)| (1 - r(x)) \leq |Z_{\{x\} \cup \Lambda}(z)| \leq |Z_\Lambda(z)| (1 + r(x))$. 
Accordingly,
\[
	0 < (1 - r)^\Lambda \leq |Z_\Lambda(z)| \leq \pi^{|z| \1_\Lambda}[|\kappa|] \leq (1 + r)^\Lambda < \infty
\]
follows inductively if $\Lambda \Subset \X$ and combining monotone and dominated convergence extends the result to the case of merely $r$-finite $\Lambda \subset \X$.

The mostly analogous argument for our Mayer series imitates the derivation of \eqref{eq:M-induction-zhat}, where Lemmas~\ref{lem:superposition-principle} and \ref{lem:pi-z-conv-g-generalised} give
\[
	\pi^{|z| \1_{\{x\} \cup \Lambda}}[|\varphi|] \leq \pi^{|z| \1_\Lambda}[|\varphi|] + \pi^{|z|}[|\varphi| \1_{\mathbf N(\{x\} \cup \Lambda) \setminus \mathbf N(\Lambda)}]
\]
with
\[
	\pi^{|z|}[|\varphi| \1_{\mathbf N(\{x\} \cup \Lambda) \setminus \mathbf N(\Lambda)}] \leq \sum_{m \in \N} \frac{\xi_\Lambda(x)^m}{m} \leq \sum_{m \in \N} \frac{r(x)^m}{m} = - \log(1 - r(x)) < \infty,
\]
cf.\ \eqref{eq:zhat-xi-r-1}. 
In particular, this already justifies
\[
	\pi^{z}[\varphi \1_{\mathbf N(\{x\} \cup \Lambda) \setminus \mathbf N(\Lambda)}] = \sum_{m \in \N} \frac{(-1)^{m-1} \widehat z_\Lambda(x, z)^m}{m} = \log(1 + \widehat z(x, z)) \in \C.
\]
Sidestepping a proper induction argument, we now fix some total order $\preceq$ on $\Lambda$ and set $\Lambda_{\prec x} : = \{y \in \Lambda \mid y \prec x\}$ for all $x \in \Lambda$, where $\prec$ again denotes $\preceq$ without equality. 
Since clearly
\[
	\mathbf N(\Lambda) = \{0\} \sqcup \bigsqcup_{x \in \Lambda} (\mathbf N(\{x\} \cup \Lambda_{\prec x}) \setminus \mathbf N(\Lambda_{\prec x}))
\]
and $\pi^{|z| \1_\varnothing}[|\varphi|] = |\varphi(0)| = 0$, we have
\[
	\pi^{|z| \1_\Lambda}[|\varphi|] = \pi^{|z|}[|\varphi| \1_{\mathbf N(\Lambda)}] = \sum_{x \in \Lambda} \pi^{|z|}[|\varphi| \1_{\mathbf N(\{x\} \cup \Lambda_{\prec x}) \setminus \mathbf N(\Lambda_{\prec x})}]
\]
and bounding each summand according to the above yields
\[
	\pi^{|z| \1_\Lambda}[|\varphi|] \leq - \sum_{x \in \Lambda} \log(1 - r(x)) \leq - \log(1 - r)[\Lambda].
\]
Hence, if $\Lambda$ is $r$-finite, then $|M_\Lambda(z)| \leq \pi^{|z| \1_\Lambda}[|\varphi|] \leq - \log(1 - r)[\Lambda] < \infty$, so
\[
	|Z_\Lambda(z)| \leq \pi^{|z| \1_\Lambda}[|\kappa|] = \pi^{|z| \1_\Lambda}[|\exp_\ast \varphi|] \leq \e^{\pi^{|z| \1_\Lambda}(z)[|\varphi|]} 	< \infty
\]
and Lemmas~\ref{lem:pi-z-conv-g-generalised} and \ref{lem:superposition-principle} let us drop all absolute values to arrive at
\[
	Z_\Lambda(z) = \exp(M_\Lambda(z)) = \prod_{x \in \Lambda} (1 + \widehat z_{\Lambda_{\prec x}}(x, z)) \in \C \setminus \{0\},
\]
the latter identities also yielding the bounds
\[
	0 < (1 - r)^\Lambda \leq |Z_\Lambda(1)| \leq (1 + r)^\Lambda < \infty.
\]
With this, the proof of Corollary~\ref{cor:Z-M-r-bounds} is complete.
\hfill $\square$

\subsection{Proof of Theorem~\ref{thm:dobrushin-sokal-polymer}}

Given the long history of the polymer expansion, we keep this part of the exposition rather short. 
We assume $V(x) = 0$ for all $x \in \X$, disregarding the easy but not particularly insightful relaxation of this assumption, and work with the condition of Gruber and Kunz. 
They defined polymer systems slightly differently but their bounds throughout \cite[Subsection~4.2]{ref:gruber-kunz} are still recognisable in the following version adopted from \cite{ref:fernandez-procacci, ref:bissacot-fernandez-procacci, ref:jansen-kolesnikov}. 
$\alpha : \X \to \R_+$ and $r = \frac{\alpha}{1+\alpha} : \X \to [0, 1)$ are as before.

\begin{prop}[{\cite[Theorem~2.4]{ref:bissacot-fernandez-procacci}/\cite[Corollary~2.8]{ref:jansen-kolesnikov}}] \label{prop:gruber-kunz}
	Let $w : \P \to \C$ and suppose that
	\[
		\sum_{X \Subset \X : x \in X} |w(X)| (1 + \alpha)^X \leq \alpha(x)
	\]
	for all $x \in \X$. 
	Then, for every $r$-finite $\Lambda \subset \X$, one has
	\[
		0 < (1-r)^\Lambda \leq \Xi_\Lambda(-|w|) \leq |\Xi_\Lambda(w)| \leq \Xi_\Lambda(|w|) \leq (1+r)^\Lambda < \infty,
	\]
	\[
		|\Omega_\Lambda(w)| \leq \sum_{\mathbf n \in \mathbf N(P(\Lambda))} \frac{|w|^\mathbf n}{\mathbf n!} |\varphi_\P(\mathbf n)| = - \Omega_\Lambda(-|w|) \leq - \log (1 - r)[\Lambda] < \infty,
	\]
	and $\Xi_\Lambda(w) = \exp(\Omega_\Lambda(w)) \in \C \setminus \{0\}$.
\end{prop}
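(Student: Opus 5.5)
The plan is to rerun the argument of Theorem~\ref{thm:dobrushin-sokal} and Corollary~\ref{cor:Z-M-r-bounds}, but inducting over the \emph{sites} of $\X$ rather than over polymers. The polymer hard-core structure makes the Kirkwood--Salsburg step explicit, so the whole proof should reduce to one recursion. Fix $\Lambda \Subset \X$ first; $r$-finite $\Lambda$ are handled at the end by monotone and dominated convergence, exactly as in the proof of Corollary~\ref{cor:Z-M-r-bounds}.

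\textbf{Site-wise effective activities.} For $x \notin \Lambda$, adding $x$ adds precisely the polymers containing $x$. Since such a polymer excludes every other polymer meeting it, we get the formal identity
\[
	\Xi_{\Lambda \cup \{x\}}(w) = \Xi_\Lambda(w)\,(1 + \widehat w_\Lambda(x)), \qquad
	\widehat w_\Lambda(x) := \sum_{X \subset \Lambda \cup \{x\} : x \in X} w(X) \frac{\Xi_{\Lambda \setminus X}(w)}{\Xi_\Lambda(w)}.
\]
Write $X \setminus \{x\} = \{y_1 \prec \dots \prec y_k\}$ in a fixed order. The ratio $\Xi_{\Lambda\setminus X}/\Xi_\Lambda$ then telescopes into $\prod_{i} (1+\widehat w_{\Lambda_i}(y_i))^{-1}$ for suitable decreasing volumes $\Lambda_i$. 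The inductive claim is $|\widehat w_{\Lambda'}(y)| \le r(y)$ for all smaller volumes $\Lambda'$ and all $y \notin \Lambda'$. This gives $|1+\widehat w|^{-1} \le (1-r(y))^{-1} = 1+\alpha(y)$, and hence
\[
	|\widehat w_\Lambda(x)| \le \sum_{X \ni x} |w(X)| (1+\alpha)^{X\setminus\{x\}} \le \frac{\alpha(x)}{1+\alpha(x)} = r(x).
\]
The second inequality is exactly the hypothesis divided by $1+\alpha(x)$, so the induction closes.

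\textbf{Absolute convergence.} The main obstacle is that we need bounds on the absolute series, not just on $\Xi_\Lambda(w)$. For this I would run the same recursion with $-|w|$ in place of $w$. Since $\varphi_\P(\mathbf n)$ has sign $(-1)^{|\mathbf n|-1}$ (a standard fact for hard-core pair gases), we have $-\Omega_\Lambda(-|w|) = \sum \frac{|w|^{\mathbf n}}{\mathbf n!}|\varphi_\P(\mathbf n)|$. In the $-|w|$ recursion all quantities are real, and by induction $0 \le -\widehat w_\Lambda(x) \le r(x)$ with $\Xi(-|w|) > 0$. Then $-\Omega_\Lambda(-|w|) = -\sum_{x\in\Lambda}\log(1+\widehat w_{\Lambda_{\prec x}}(x)) \le -\log(1-r)[\Lambda]$, which is the analogue of the site-ordered decomposition used for \eqref{eq:M-induction-zhat}. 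This absolute bound dominates $|\Omega_\Lambda(w)|$. Lemma~\ref{lem:pi-z-conv-g-generalised} and the $\ast$-exponential identity then justify $\Xi_\Lambda(w) = \exp(\Omega_\Lambda(w)) \ne 0$ as an identity in $\C$, with no formal manipulation left unjustified.

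\textbf{Two-sided bounds.} For the upper bound, monotonicity $\Xi_{\Lambda\setminus X}(|w|) \le \Xi_\Lambda(|w|)$ gives $\Xi_{\Lambda\cup\{x\}}(|w|) \le \Xi_\Lambda(|w|)\bigl(1+\sum_{X\ni x}|w(X)|\bigr)$. Since $(1+\alpha)^X \ge 1+\alpha(x)$, the hypothesis gives $\sum_{X\ni x}|w(X)| \le r(x)$, so $\Xi_\Lambda(|w|) \le (1+r)^\Lambda$. The lower bound $\Xi_\Lambda(-|w|) \ge (1-r)^\Lambda$ is the product formula with $-\widehat w \le r$. The middle inequality $\Xi_\Lambda(-|w|) \le |\Xi_\Lambda(w)|$ follows from comparing $\exp(\Omega_\Lambda(w))$ with $\exp(\Omega_\Lambda(-|w|))$: we have $\Re\,\Omega_\Lambda(w) \ge -|\Omega_\Lambda(w)| \ge \Omega_\Lambda(-|w|)$ by the absolute bound. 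Finally $|\Xi_\Lambda(w)| \le \Xi_\Lambda(|w|)$ is trivial. The delicate points I expect are the sign fact for $\varphi_\P$ and checking that the telescoping uses only volumes on which the induction hypothesis has already been established.
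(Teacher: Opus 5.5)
\textbf{There is a gap: nothing in your argument proves that the polymer Mayer series converges.} (For reference, the paper does not prove this proposition; it cites it from Bissacot--Fern\'andez--Procacci and Jansen--Kolesnikov. Both of those proofs control absolute coefficient sums directly.)

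Your site-wise recursion on values is sound. By induction on $|\Lambda|$ it gives $\Xi_\Lambda(w)\neq 0$ and $|\widehat w_\Lambda(x)|\le r(x)$. With $-|w|$ in place of $w$ it gives $0\le -\widehat w_\Lambda(x)\le r(x)$ and $\Xi_\Lambda(-|w|)\ge (1-r)^\Lambda$. Your bound $\Xi_\Lambda(|w|)\le(1+r)^\Lambda$ is also fine. This is essentially the zero-freeness argument of Part~I.

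The gap is the step $-\Omega_\Lambda(-|w|) = -\sum_{x\in\Lambda}\log(1+\widehat w_{\Lambda_{\prec x}}(x))$. The left-hand side is the numerical sum $\sum_{\mathbf n}\frac{|w|^{\mathbf n}}{\mathbf n!}|\varphi_\P(\mathbf n)|\in\overline\R_+$. The right-hand side is the finite number $-\log\Xi_\Lambda(-|w|)$. Equating them means evaluating the formal identity $\Xi=\exp(\Omega)$ at $-|w|$, which is legitimate only once you know the series converges there. That convergence is exactly the statement to be proved.

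Appealing to Lemma~\ref{lem:pi-z-conv-g-generalised} and the $\ast$-exponential identity afterwards is circular, because both need finiteness of the absolute series as input. The analogy with the proof of Corollary~\ref{cor:Z-M-r-bounds} also fails. There, the inductive quantity $\xi_\Lambda(x)$ is itself an absolute coefficient sum, so $\pi^{|z|}[|\varphi|\1_{\mathbf N(\{x\}\cup\Lambda)\setminus\mathbf N(\Lambda)}]\le\sum_m \xi_\Lambda(x)^m/m$ holds in $\overline\R_+$ with no prior convergence. Your $\widehat w_\Lambda(x)$ is a ratio of polynomials, and bounds on its values say nothing directly about coefficient sums.

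\textbf{How to close it.} The missing idea is a short analyticity argument that uses the fact that the hypothesis depends only on $|w|$.
\begin{itemize}
\item Take $\Lambda$ finite. Every $w'$ with $|w'|\le|w|$ pointwise also satisfies the hypothesis, so your recursion gives $p(t):=\Xi_\Lambda(-t|w|)\neq 0$ for all complex $t$ with $|t|\le 1$.
\item Since $p$ is a polynomial with $p(0)=1$, it has no zeros on some disc $|t|<1+\varepsilon$. Hence $\log p$ is holomorphic there.
\item The Taylor series of $\log p$ at $0$ is the formal series $\Omega_\Lambda(-t|w|)$ grouped by total degree. By the sign property this equals $-\sum_k c_k t^k$, where $c_k=\sum_{|\mathbf n|=k}\frac{|w|^{\mathbf n}}{\mathbf n!}|\varphi_\P(\mathbf n)|\ge 0$ is a finite sum. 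This series converges at $t=1$.
\item By Tonelli, $\sum_{\mathbf n}\frac{|w|^{\mathbf n}}{\mathbf n!}|\varphi_\P(\mathbf n)|=\sum_k c_k=-\log\Xi_\Lambda(-|w|)$. Only at this point does your product formula give the bound $-\log(1-r)[\Lambda]$.
\end{itemize}
With this step inserted, the rest of your argument goes through: domination of $|\Omega_\Lambda(w)|$, the identity $\Xi_\Lambda(w)=\exp(\Omega_\Lambda(w))$, the middle inequality via $\Re\,\Omega_\Lambda(w)\ge\Omega_\Lambda(-|w|)$, and the passage to $r$-finite $\Lambda$.

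Alternatively, run your induction on absolute coefficient sums rather than on values, as the paper does with $\xi_\Lambda$.
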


When looking up this result, one should mentally write $\alpha = \e^a - 1$ and note the alternating sign property $\varphi_\P = (-1)^{|\bullet|-1} |\varphi_\P|$, see, e.g., \cite{ref:scott-sokal, ref:fernandez-procacci, ref:bissacot-fernandez-procacci, ref:jansen-kolesnikov}. 
The extension from finite to $r$-finite reference volumes is again just monotone/dominated convergence. 
We connect the hypothesis of Theorem~\ref{thm:dobrushin-sokal-polymer} to that of Proposition~\ref{prop:gruber-kunz} via the following adaptation of the second part of our proof of Theorem~\ref{thm:dobrushin-sokal}.

\begin{lemma} \label{lem:dobrushin-sokal-subsets}
	Let $\zeta, A : \X \to \R_+$ and suppose that $\zeta \, \Gamma_A \leq A$. 
	Then, for all $x \in \X$, we also have
	\[
		\sum_{X \Subset \X : x \in X} \zeta^X |\varphi(X)| \leq \zeta(x) \Gamma_A(x) \leq A(x).
	\]
\end{lemma}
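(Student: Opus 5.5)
The plan is to re-run the Kirkwood--Salsburg induction from the proof of Theorem~\ref{thm:dobrushin-sokal}, but only on \emph{simple} configurations (finite subsets). On subsets there is no self-repulsion term, so no geometric series $\sum_m \xi^m$ appears, and the factor $(1+A)$ of Theorem~\ref{thm:dobrushin-sokal} drops out.

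\textbf{Rewriting the sum.} By \eqref{eq:psi-x-D-x-phi}, for $x \notin Y$ we have $\varphi(\{x\} \cup Y) = \psi(\delta_x, Y \mid \varnothing)$. Hence the left-hand side of the lemma equals $\zeta(x) \sum_{Y \Subset \X \setminus \{x\}} \zeta^Y |\psi(x, Y)|$. Accordingly, for $\Lambda \Subset \X$, $B \Subset \X$ and $X \Subset \X \setminus B$, I define the set-restricted quantity
\[
	\eta_\Lambda(X \mid B) := \zeta^X \sum_{W \subset \Lambda \setminus (X \cup B)} \zeta^W |\psi(X, W \mid B)|.
\]
The claim to prove by induction on $|\Lambda|$ is that $\eta_\Lambda(X \mid B) \leq A^X$, and more precisely $\eta_\Lambda(x \mid B) \leq \zeta(x) \Gamma_A(x)$ for each single site $x \notin B$. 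Monotone convergence in $\Lambda$ then gives the lemma, taking $B = \varnothing$ and letting $\Lambda \uparrow \X \setminus \{x\}$.

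\textbf{Two structural inequalities.} First, restricting \eqref{eq:psi-conditional-convolution} to simple arguments, where the convolution becomes a sum over splittings of the set $W$, gives the submultiplicativity
\[
	\eta_\Lambda(X \mid B) \leq \prod_{y \in X} \eta_\Lambda(y \mid \{y' \in X \mid y' \prec y\} \cup B),
\]
exactly as in \eqref{eq:xi-leq-prod-xi}. Second, evaluating \eqref{eq:psi-gamma-psi} with $\mathbf m = \delta_x$, $\mathbf n = 0$ at a simple $W$ splits $W$ into disjoint $Y$ and $W \setminus Y$. This gives
\[
	\eta_\Lambda(x \mid B) \leq \zeta(x) \sum_{Y \subset \Lambda \setminus (\{x\} \cup B)} |\gamma(x, Y \mid B)|\, \eta_{\Lambda \setminus \{x\}}(Y \mid B).
\]
Here the sites of $Y$ have become roots, so only $W \setminus Y \subset \Lambda \setminus (\{x\} \cup Y \cup B)$ carries activities. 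Simplicity of $V$ makes the $\psi$-factor vanish whenever the root and the remaining set overlap, so the index set really is a strictly smaller volume. This is what lets the induction close with $A$ in place of $\alpha$ and $r$.

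\textbf{Closing the induction.} Insert the induction hypothesis $\eta(Y \mid B) \leq A^Y$ into the second inequality. The computation leading to \eqref{eq:sum-gamma-alpha-leq-Gamma-alpha}, carried out with $\alpha$ replaced by $A$ and using Lemma~\ref{lem:spanning-hypergraph-partition-scheme}, bounds $\sum_Y |\gamma(x, Y \mid B)| A^Y$ by $\prod_{e \Subset \X \setminus \{x\}} \mathrm m(x, e \mid B)$. Since $V(x) = 0$, the $e = \varnothing$ factor equals $1$. Each $\mathrm m(x, e \mid B)$ is dominated by the corresponding unconditional factor, since $\max$ over fewer $S$ is smaller and factors with $e \subset B$ are at most $\max\{|\e^{-V}|, \dots\}$. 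So the product is at most $\Gamma_A(x)$, and we get $\eta_\Lambda(x \mid B) \leq \zeta(x) \Gamma_A(x) \leq A(x)$. The base case $\Lambda = \varnothing$ reads $\zeta(x) |\kappa(x \mid B)| \leq \zeta(x) \Gamma_A(x)$, which holds for the same reason.

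\textbf{Expected obstacle.} The step I expect to need the most care is justifying the set-restricted Kirkwood--Salsburg inequality. Concretely, one must check that restricting \eqref{eq:psi-gamma-psi} to simple $W$ only produces terms $\psi(Y, W' \mid B)$ with $Y$, $W'$ and $B$ pairwise disjoint, so that no multi-index contributions, and hence no geometric series, enter. One must also check that the conditioned factors $\mathrm m(x, e \mid B)$ for $e \subset B$ are still bounded by the corresponding factors of $\Gamma_A$. If the disjointness does not come out directly, the fallback is to first apply \eqref{eq:xi-leq-prod-xi}-type splitting to the root set $Y$, converting those roots into boundary conditions, and only then use simplicity.
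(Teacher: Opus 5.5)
Your proposal is correct and is essentially the paper's own proof. Restricting to simple configurations disjoint from the roots removes \eqref{eq:xi-x-in-x-out}, and the set-restricted Kirkwood--Salsburg induction of Lemma~\ref{lem:xi-KS-induction} is then closed by \eqref{eq:sum-gamma-alpha-leq-Gamma-alpha} with $A$ in place of $\alpha$, together with $\mathrm m(x,e\mid B)\le \mathrm m(x,e\mid\varnothing)$. Carrying the sharper bound $\zeta(x)\Gamma_A(x)$ through the induction also lets you skip the paper's preliminary fixed-point reduction to $\zeta\,\Gamma_A = A$.

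Do drop the aside that simplicity makes $\psi(Y,W'\mid B)$ vanish when $Y\cap W'\neq\varnothing$. It is false: e.g.\ $\psi(\delta_y,\delta_y)=\varphi(2\delta_y)=-1$ if $V(y)=0$, which is exactly where the geometric series in \eqref{eq:z-rho-induction-zhat} comes from. It is also unneeded, since splitting a simple $W$ disjoint from $\{x\}\cup B$ via \eqref{eq:psi-gamma-psi} only ever produces disjoint pairs $(Y,W\setminus Y)$.
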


\begin{proof}
	If we set $A_0 := A$ and inductively define
	\[
		A_{n+1} : = \zeta \, \Gamma_{A_n} \leq A_n \quad \text{with} \quad \zeta \, \Gamma_{A_{n+1}} \leq \zeta \, \Gamma_{A_n} = A_{n+1}
	\]
	for all $n \in \N_0$, followed by $A_\infty : = \lim_{n\to\infty} A_n = \inf_{n \in \N_0} A_n$, then pointwise applied dominated convergence guarantees
	\[
		\zeta \, \Gamma_{A_\infty} = \lim_{n \to \infty} \zeta \, \Gamma_{A_n} = \lim_{n \to \infty} A_{n+1} = A_\infty.
	\]
	By replacing $A$ with $A_\infty$, we therefore assume $\zeta \, \Gamma_A = A$ without loss of generality. 
	At this point, \eqref{eq:psi-x-D-x-phi} implies that it is sufficient to show
	\[
		\zeta^X \sum_{Y \Subset \X \setminus X} \zeta^Y |\psi(X, Y)| = \zeta^X \pi^\zeta[|\psi(X, \bullet)| \1_{\mathbf F(\X \setminus X)}] \leq A^X
	\]
	for all $X \Subset \X$. 
	By a slight adaptation of Lemma~\ref{lem:xi-KS-induction}, in which the use of \eqref{eq:xi-x-in-x-out} becomes obsolete, this can be further reduced to showing
	\[
		\zeta(x) \sum_{Y \Subset \X \setminus (\{x\} \cup B)} |\gamma(x, S \mid B)| A^Y \leq \zeta(x) \Gamma_A(x) = A(x)
	\]
	for arbitrary $x \in \X$ but the inequality here is already established in \eqref{eq:sum-gamma-alpha-leq-Gamma-alpha}.
\end{proof}

Lemma~\ref{lem:dobrushin-sokal-subsets} is our current iteration of improving upon \cite[Theorem~2]{ref:procacci-scoppola} as it pertains to lattice gases in our sense. 
The initial simplification in our preceding proof is also heavily inspired by, e.g., \cite{ref:fernandez-procacci, ref:bissacot-fernandez-procacci}, \cite{ref:faris} or \cite{ref:jansen-hierarchical}.

\begin{proof}[Proof of Theorem~\ref{thm:dobrushin-sokal-polymer}]
	Let $z : \X \to \C \setminus \{-1\}$ and fix a choice of $A : \X \to \R_+$ satisfying \eqref{eq:dobrushin-sokal-polymer}. 
	After possibly decreasing $\alpha$, we assume $A\frac{\Gamma_A-1}{\Gamma_A} = \alpha$  without loss and accordingly
	\[
		|{\textstyle \frac{z}{1+z}}| (1+\alpha) \Gamma_A = |{\textstyle \frac{z}{1+z}}| (\Gamma_A + A (\Gamma_A-1)) = |{\textstyle \frac{z}{1+z}}| ((1+A) \Gamma_A - A) \leq A.
	\]
	For $\zeta = |\frac{z}{1+z}| (1 + \alpha)$ and an arbitrary $x \in \X$, Lemma~\ref{lem:dobrushin-sokal-subsets} now yields
	\[
		\sum_{X \Subset \X : x \in X} \zeta^X |\varphi(X) - \1_{\{|X|=1\}}| \leq \zeta(x) (\Gamma_A(x) - 1),
	\]
	noting that $\varphi(x) = 1$ due to $V(x) = f(x) = 0$. 
	Observe that
	\[
		\zeta \, (\Gamma_A - 1) = |{\textstyle \frac{z}{1+z}}| (1 + \alpha) (\Gamma_A - 1) = |{\textstyle \frac{z}{1+z}}| (1 + \alpha) \Gamma_A {\textstyle \frac{\Gamma_A - 1}{\Gamma_A}} \leq A {\textstyle \frac{\Gamma_A - 1}{\Gamma_A}} = \alpha
	\]
	so, writing $w = (\frac{z}{1+z})^\bullet (\varphi - \1_{\{|\bullet| = 1\}}) : \P \to \C$, we obtain
	\[
		\sum_{X \Subset \X : x \in X} |w(X)|(1+\alpha)^X = \sum_{X \Subset \X : x \in X} \zeta^X |\varphi(X) - \1_{\{|X|=1\}}| \leq \alpha(x)
	\]
	for all $x \in \X$. 
	An application of Proposition~\ref{prop:gruber-kunz} completes the proof.
\end{proof}

\subsubsection*{Statement on data availability and no conflict of interest}

Data sharing is not applicable. 
We do not analyse or generate any datasets because our work proceeds within a theoretical and mathematical approach. 

The author has no competing interests to declare that are relevant to the content of this article.

\subsubsection*{Acknowledgement.}

This research has partially been funded by the Deutsche Forschungsgemeinschaft (DFG) through grant SPP 2265 ``Random Geometric Systems'', Project P13, followed by support under Germany’s excellence strategy EXC-2111-390814868 until September 2025. 
The author would like to thank Sabine Jansen and Leonid Kolesnikov for helpful discussions.

\end{document}